\newcommand{\nnote}[1]{\textcolor{brown}{[*** Nic: #1 ***]}}
\newcommand{\enote}[1]{\textcolor{red}{[*** Emina: #1 ***]}}
\newcommand{\mnote}[1]{\textcolor{blue}{[*** Michael: #1 ***]}}
\newcommand{\rev}[1]{\textcolor{blue}{#1}}
\renewcommand{\rev}[1]{#1}
\renewcommand{\nnote}[1]{}
\renewcommand{\enote}[1]{}
\renewcommand{\mnote}[1]{}
\newcommand{\F}{\mathbb{F}}
\newcommand{\Z}{\mathbb{Z}}
\newcommand{\bbF}{\mathbb{F}}
\newcommand{\C}{\mathbb{C}}
\newcommand{\N}{\mathbb{N}}
\newcommand{\ul}[1]{\underline{#1}}
\newcommand{\pr}[1] {{\left( #1 \right)}}
\renewcommand{\sb}[1] {{\left[ #1 \right]}}
\DeclareMathOperator{\supp}{Supp}
\DeclareMathOperator{\CS}{CS}
\DeclareMathOperator{\Tr}{Tr}
\DeclareMathOperator{\CNOT}{CNOT}
\declaretheorem{theorem, definition, proposition, lemma, corollary, example, question}
\declaretheorem[numbered=no]{remark, notation}
\begin{document}

\title{Optimal Strategies for Winning 
Certain Coset-Guessing Quantum Games}

\author{Michael Schleppy,
~\IEEEmembership{Student Member,~IEEE,}
Emina Soljanin,
~\IEEEmembership{Fellow,~IEEE,}
and Nicolas Swanson,~\IEEEmembership{Student Member,~IEEE}
        % <-this % stops a space
\thanks{M.~Schleppy and E.~Soljanin are with the Department of Electrical and Computer Engineering, Rutgers, the State University of New Jersey, Piscataway, NJ 08854, USA, e-mail: \{michael.schleppy, emina.soljanin\}@rutgers.edu). Nicolas Swanson is at the Combinatorics and Optimization department, the University of Waterloo, Waterloo, ON, Canada, e-mail: nswanson@uwaterloo.ca.
This  research  is in part based  upon  work  supported  by  the  National  Science  Foundation  under  Grant  \# FET-2007203}
%\thanks{Copyright (c) 2017 IEEE. Personal use of this material is permitted. However, permission to use this material for any other purposes must be obtained from the IEEE by sending a request to pubs-permissions@ieee.org.}}% <-this % stops a space}}
}
% The paper headers
%\markboth{Journal of \LaTeX\ Class Files,~Vol.~1, No.~2, December~2023}%
%{Shell \MakeLowercase{\textit{et al.}}: A Sample Article Using IEEEtran.cls for IEEE Journals}

%\IEEEpubid{0000--0000~\copyright~2023 IEEE}
% Remember, if you use this you must call \IEEEpubidadjcol in the second
% column for its text to clear the IEEEpubid mark.

\maketitle

\begin{abstract}
In a recently introduced coset guessing game, Alice plays against Bob and Charlie, aiming to meet a joint winning condition. Bob and Charlie can only communicate before the game starts to devise a joint strategy.
The game we consider begins with Alice preparing a $2m$-qubit quantum state based on a random selection of three parameters. She sends the first $m$ qubits to Bob and the rest to Charlie, and then reveals to them her choice for one of the parameters. Bob is supposed to guess one of the hidden parameters, Charlie the other, and they win if both guesses are correct. From previous work, we know that the probability of Bob's and Charlie's guesses being simultaneously correct goes to zero exponentially as $m$ increases.
We derive a tight upper bound on this probability and show how Bob and Charlie can achieve it. While developing an optimal strategy, we devised an encoding circuit using only $\CNOT$ and Hadamard gates, \rev{which builds CSS codes from EPR pairs using only local operations.} We found that the role of quantum information that Alice communicates to Bob and Charlie is to make their responses correlated rather than improve their individual (marginal) correct guessing rates. 

\end{abstract}

\begin{IEEEkeywords}
\noindent
quantum correlations, classical correlations, quantum games, coset states, subspace states, CSS codes, monogamy of entanglement, quantum circuit
\end{IEEEkeywords}
\begin{comment}
\newpage
Here are some points we want to address in the paper:
\begin{enumerate}
 \item \ul{Alice + a single player game, i.e., marginals:} What can Bob do alone with $n$ qubits?
    \item \ul{correlations}: $p_n$ is the joint probability of both Bob and Charlie being right. What are the marginals, and what does "marginals vs.\ joint" tell us about correlations in subspace states?
    \item \ul{monogamy of entanglement}: Where do we observe it?
    \item \ul{CSS codes:} connections, relevance, possible implications of this work
    \item \ul{Relations to quantum circuit design:} find papers
    \item \ul{intuition behind the CNOT maps}:
    \item \ul{Scaling:} If Bob is alone, what can he do if the number of qubits received from Alice increases to $n+1, n+2,\dots 2n-1$?
    \item What about overlapping allocations of qubits to Bob and Charlie?
    \item Is the CNOT strategy the only optimal one, i.e., is the game rigid?
\end{enumerate}  
\end{comment}
%\newpage
\section{Introduction}

% \enote{I have put sections below to help us discern what we are addressing. We can eventually remove some. We talk a lot about the general area and never crisply say what was missing in the prior work that our paper solves.}

Quantum information science (QIS) promises to revolutionize the world of computing and information processing by exploiting quantum mechanics' unique properties. Traditional interest in QIS has revolved around computational speedups using quantum algorithms (i.e., Shor's factoring algorithm \cite{shor1994algorithms}) and information transmission (i.e., quantum teleportation \cite{bennett1993teleporting}). These methods utilize phenomena such as superposition and entanglement, which are present in the quantum setting but not in the classical one. 

The postulates of quantum mechanics also introduce restrictions, such as how quantum information is processed. A famous consequence is the no-cloning theorem (stemming from the unitary processing restriction), which refutes the existence of a universal cloning gate. 
Various information-securing schemes developed over the years exploit the no-cloning theorem to control how classical information encoded in quantum states can be accessed and distributed. 

One of the early discovered implications of the no-cloning theorem is the principle of no passive eavesdropping in quantum communications, which underpins the security of protocols such as quantum key distribution \cite{bennett2014quantum} and blind quantum computing \cite{Broadbent_2009}.
Another renowned, early security application of the no-cloning theorem is Wiesner's quantum money scheme \cite{wiesner1983conjugate}, revisited and extended by Aaronson and Christiano \cite{Qmoney:AaronsonC13}.
There are several related and more recent cryptographic schemes, e.g., \cite{coladangelo2021hidden,ananth2023cloning} and references therein. 

A cryptographically useful notion, first considered in \cite{coladangelo2021hidden,QK:VidickZ21} is that of a {\it coset state} defined as follows: 
Given a $k$-dimensional subspace $W\in \bbF_2^n$ and vectors $x,z \in \bbF_2^n$, we define the \textit{coset state} $\ket{W_{x,z}}\in (\C^2)^{\otimes n}$ by
\begin{equation}
    \ket{W_{x,z}} = \frac{1}{\sqrt{2^k}} \sum_{u \in W} (-1)^{z \cdot u} \ket{x + u}.
\end{equation} 
The state $\ket{W_{x,z}}$  depends on $W$, the coset $W + x$ and the coset $W^\perp + z$; hence the name coset state. \rev{The reader familiar with CSS codes will recognize that coset states correspond exactly to a CSS code's code or error states.}
Suppose a transmitter, Alice, prepares $\ket{W_{x,z}}$ for $k=m$ and $n=2m$ and sends the first half of its $2m$ qubits to a receiver, Bob, and the other half to another receiver, Charlie. Suppose further that Alice reveals her choice of $W$ to the receivers and asks Bob to guess $x$ and Charlie to guess $z$.
They can only communicate before the transmission starts. The concern of this paper is the probability that Bob and Charlie make correct guesses simultaneously. (Any $x'$ and $z'$ s.t.\  $x+x'\in W$ and $z+z'\in W^\perp$ constitute a correct answer.)

The authors of \cite{coladangelo2021hidden} conjectured that the probability of Bob's and Charlie's guesses being simultaneously correct goes to zero exponentially fast as with $n$, even if they can request that Alice passes $\ket{W_{x,z}}$ through a quantum channel (they chose) before revealing $W$. They showed that the validity of the conjecture implies the security of several cryptographic protocols. The proof of the conjecture appeared later in \cite{CV22}. The authors derived an exponentially decreasing upper bound on the simultaneous correct guessing probability by using
various general operator inequalities and techniques used in \cite{tomamichel2013monogamy} to show the monogamy of entanglement in some related problems. There are no claims in \cite{CV22} that the bound is optimal nor methods to construct Bob and Charlie's strategies to achieve it.

We focus on the scenario where Bob and Charlie cannot request that Alice pass $\ket{W_{x,z}}$ through some quantum channel before revealing $W$. We first derive an upper bound lower than the one derived in \cite{CV22}, then show how Bob and Charlie can achieve it. Thus, the bound is optimal in this scenario. The proof techniques rely on the algebra and combinatorics of the vector space $\F_2$, different from those used in \cite{coladangelo2021hidden, tomamichel2013monogamy}. 

We present an optimal guessing strategy for Bob and Charlie by using and modifying some ideas from (quantum CSS) coding theory. 
\rev{The authors of \cite{culf2022group} also observed a connection between coset monogamy games with CSS codes but did not explore strategies for these games.}
\rev{While an encoder for CSS code states using Hadamard and $\CNOT$ gates is established \cite{Lai2017},
our optimal strategy includes a novel and explicit way of mapping an all-zero qubit state to EPR pairs shared between Bob and Charlie, then to the corresponding CSS-encoded state also using only Hadamard and $\CNOT$ gates. Our encoding map also naturally leads to a simultaneous optimal decoder for both Bob and Charlie, which is essential to the optimal strategy. 
%We do this using only Hadamard and $\CNOT$ gates. The ability to encode CSS code states using Hadamard and $\CNOT$ gates is established, for instance, in \cite{Lai2017}.
}
%\mnote{Expand more on the connection with QEC here}
This map is an offshoot contribution of possible interest to quantum coding theorists as simplifying circuitry for encoding, decoding, and operating systems implementing CSS codes is an active subject of current research \cite{CSS:RengaswamyCNP20}.

We do not intend to claim that our results have further cryptographic relevance. \rev{Beyond cryptography, quantum games, and the resulting correlations that arise, are tools for studying problems in physics foundations. For example, they provide a natural setting for the experimental measurement of Bell inequalities \cite{brunner2014bell}.} Our interests come from the information theory of distributed communications and computing systems. Here, we aim to understand and quantify the classical and quantum correlations between Bob's and Charlie's guesses. To explain what this paper accomplishes toward that goal, consider the following classical scenario.

Suppose Alice reveals to Bob and Charlie her choice of $W\subset \mathbb{F}_2^{2m}$ but withholds information about the $x$ and $z$ she picked uniformly at random. She asks Bob to guess the $x$-coset of $W$ and Charlie the $z$-coset of $W^{\perp}$ without sending them any further classical or quantum information. The probability that Bob guesses correctly is $1/2^m$, which is also true for Charlie.
The probability that both independently and simultaneously guess correctly is thus $1/2^{2m}$. This paper shows that when Alice splits the qubits of  $\ket{W_{x,z}}$ and sends half to Bob and half to Charlie, the optimal probability of simultaneous correct guessing increases to $\Theta(1/2^{m})$ while the individual correct guessing probabilities on average remain on the order of $1/2^m$. \rev{ The primary connection we demonstrate to quantum error correction is the analysis of simultaneous optimal decoding for Bob and Charlie when given parts of a coset state. Namely, given their respective halves of a coset state, the best decoder Bob and Charlie can jointly use to decode the entire state is comprised of optimal decoders on both sides. %Amazingly, these marginal probabilities can be achieved jointly through the classical and quantum correlations of the coset state and are achieved using the optimal decoding procedures on their halves of the state.
}

%a problem variant in which we eliminate the Bob and Charlie's ability to control the quantum channel, i.e. the $\Ph$ In our game, Alice splits the coset states in half, giving the first half of qubits to Bob and the second half of qubits to Charlie. GOOD FOR CONCLUSIONS

%This paper exhibits an optimal strategy for a non-local guessing game of random coset states between a referee, Alice, and two non-communicating players, Bob and Charlie. We develop an encoding of subspace states using Hadamard and $\CNOT$ gates, which reveals a set of circuits for Bob and Charlie that perform optimally in this game. GOOD FOR ABSTRACT

The rest of the paper is organized as follows: In \Cref{sec:game}, we review some notation and basic definitions and formally define coset guessing as a game. In \Cref{sec:upperbound}, we derive an upper bound on Bob's marginal probability of winning. In \Cref{sec:cosetEncoding}, we develop a canonical method of encoding subspace states using Hadamard and $\CNOT$ gates.
Finally, in \Cref{sec:optstrat}, we present a strategy that  Bob and Charlie can use to achieve the bound.

\section{The Coset Guessing Game \label{sec:game}}

\subsection{Preliminaries}\label{subsec:prelims}
Here, we state the notation and define the mathematical objects we need to describe the game. We need mostly standard algebraic and basic quantum computing notions provided in the appendix. For a quick primer on quantum computing, we refer the reader to \cite{QC:Soljanin20}.
\\[2ex]
\ul{Notation}
\begin{itemize}[leftmargin=0.5cm]
\item $\ket{\phi} =  \CNOT_{i_1, j_1}H_{i_1}\cdots\CNOT_{i_\ell, j_\ell}H_{i_\ell}\ket{y}$ and 
    \item $\F_2^{k}$ -- The vector space of dimension $k$ over the field $\F_2$.
    \item $G(n,k)$ --  The set of all subspaces $W\subseteq \F^n_2$ of dimension $0 \leq k\leq n$.
    \item $W^\perp$ -- The dual space of $W$, i.e., the set of all $z \in \F_{2}^n$ for which $x\cdot z = 0$ for all $x \in W$.
    \item $x_i^j$ -- The $i,i+1,\ldots, j$-th coordinates of a vector $x \in \F_2^n$.
    \item $\binom{n}{k}_q$ -- The q-binomial coefficient, equal to $\prod_{j=0}^{k-1} \frac{q^{n-j}-1}{q^{k-j}-1}$, and represents the cardinality of $G(n,k)$ for $q=2$.
    % \item $C_x = W + x$ -- The coset of $W$ obtained by adding $x$ to each element in $W$
    \item $\CS(W)$ -- A set of coset representatives for subspace $W\subseteq \bbF_2^n$.
    \item $\CNOT_{i,T}$ -- The set of $\CNOT$ gates acting on an $n$-qubit register where $i$ is the control bit index and $T$ is the set the target bit indices.
    \item $[n] = \{1,2,\ldots, n\}$ and $[k,n] = \{k,k+1,\ldots, n\}$.
    \item \rev{$\supp{\vec v}$ - Coordinates where $\vec v$ has non-zero values.}
    % \item Let $[1..k] \colonequals \{1, \dots, k\}$ and similarly $[k..N] \colonequals \{k, k+1, \dots, N\}$.
    % \item Given $U$ a $2\times 2$ unitary matrix over $\C$ and a subset $I \subseteq [1..N]$, let $U_I$ denote the $2N\times 2N$ unitary such that $U$ is applied to all the qubits indexed by $I$.
\end{itemize}

\vspace{1ex}
\noindent\ul{Quantum Coset States}\\[1ex]
For a subspace $W \in G(2m,m)$, the \textit{subspace state} $\ket{W} \in (\C^2)^{\otimes 2m}$ is a uniform superposition of computational basis vectors with labels in $W$:
\begin{equation}
    \ket{W} = \frac{1}{\sqrt{2^m}} \sum_{u \in W} \ket{u}
\end{equation}

Given vectors $x,z \in \bbF_2^{2m}$, we define the \textit{coset state} $\ket{W_{x,z}}\in (\C^2)^{\otimes 2m}$ by
\begin{equation}
    \ket{W_{x,z}} = \frac{1}{\sqrt{2^m}} \sum_{u \in W} (-1)^{z \cdot u} \ket{x + u}.
\end{equation} 
The state $\ket{W_{x,z}}$  depends on $W$, the coset $W + x$, and the coset $W^\perp + z$. Specifically, if $x$, $x'$ belong to the same coset of $W$, and $z$, $z'$ belong to the same coset of $W^{\perp}$, then $\ket{W_{x,z}} = \pm \ket{W_{x',z'}}$, meaning they are equal up to some global phase factor. Thus, for each subspace $W$, we can generate all coset states up to a global phase by iterating through a choice of $\CS(W)$ and $\CS(W^\perp)$.

\begin{example}
    \label{ex:CosetStateIntro}
    Consider dimension one subspaces of $\F_2^2$ There are three subspaces of dimension $1$ in $\F^2_2$.
\[
 W^{(1)}=\left\{\bigl[\begin{smallmatrix}
    0\\
    0
\end{smallmatrix}\bigr],\bigl[\begin{smallmatrix}
    0\\
    1
\end{smallmatrix}\bigr]  \right\} ~~ \!\!
W^{(2)}=\left\{\bigl[\begin{smallmatrix}
    0\\
    0
\end{smallmatrix}\bigr],\bigl[\begin{smallmatrix}
    1\\
    0
\end{smallmatrix}\bigr]  \right\} ~~ \! \!
W^{(3)}=\left\{\bigl[\begin{smallmatrix}
    0\\
    0
\end{smallmatrix}\bigr],\bigl[\begin{smallmatrix}
    1\\
    1
\end{smallmatrix}\bigr]  \right\} \!\!
\]
Each subspace, with its cosets and the cosets of its dual, defines four different coset states as displayed in Figure \ref{fig:CosetStatesExample}.
{
\setlength{\textfloatsep}{0pt}
\begin{figure*}[!t]
    \captionsetup{aboveskip=0pt, belowskip=0pt}
    \centering
    \label{fig:CosetStatesExample}
    \begin{minipage}{\textwidth}
    \begin{small}
	\begin{align*}
    \CS(W^{(1)}) = \{00, 10\} & & \CS(W^{(2)}) = \{00, 01\} & & \CS(W^{(3)}) = \{00, 01\}\\
    \CS(W^{(1)\perp}) = \{00, 01\} & & \CS(W^{(2)\perp}) = \{00, 10\} & & \CS(W^{(3)\perp}) = \{00, 10\}\\
    \ket{W_{00,00}^{(1)}} = \ket{0} \otimes \ket{+} & & \ket{W_{00,00}^{(2)}} = \ket{+} \otimes \ket{0} & & \ket{W_{00,00}^{(3)}} = \frac{1}{\sqrt{2}} \left( \ket{00} + \ket{11} \right)\\
    \ket{W_{00,01}^{(1)}} = \ket{0} \otimes \ket{-} & & \ket{W_{00,10}^{(2)}} = \ket{-} \otimes \ket{0} & & \ket{W_{00,10}^{(3)}} = \frac{1}{\sqrt{2}} \left( \ket{00} - \ket{11} \right)\\
    \ket{W_{10,00}^{(1)}} = \ket{1} \otimes \ket{+} & & \ket{W_{01,00}^{(2)}} = \ket{+} \otimes \ket{1} & & \ket{W_{01,00}^{(3)}} = \frac{1}{\sqrt{2}} \left( \ket{01} + \ket{10} \right)\\
    \ket{W_{10,01}^{(1)}} = \ket{1} \otimes \ket{-} & & \ket{W_{01,10}^{(2)}} = \ket{-} \otimes \ket{1} & & \ket{W_{01,10}^{(3)}} = \frac{1}{\sqrt{2}} \left( \ket{01} - \ket{10} \right).
\end{align*}
\end{small}
\caption{Examples of coset states for $m = 1$.}
    \vspace{-\baselineskip}
    \end{minipage}

\end{figure*}
}
For a general $W \in G(2m, m)$, there will be $2^{2m}$ different coset states.
\end{example}

\subsection{The Game Dynamics}
The game has three players: Alice, Bob, and Charlie.
Alice challenges Bob and Charlie by sending a $2m$-qubit coset state $\ket{W_{x,z}}$ based on a random choice of $W\in G(2m,m)$, $x\in \mathbb{F}_2^{2m}$, and $z\in \mathbb{F}_2^{2m}$. She sends the first $m$ qubits of $\ket{W_{x,z}}$ to Bob and the rest to Charlie and then reveals her choice of $W$. 

Bob's task is to guess some $x^\prime\in W+x$, and Charlie's is to guess a $z^\prime\in W^{\perp} + z$. Thus, given a $W \in G(2m,m)$, Bob uses a POVM $\{B_x^W\}_{x \in \CS(W)}$ and Charlie uses a POVM $\{C_z^W\}_{z \in \CS(W^{\perp})}$.
They locally measure their qubits, or equivalently, they measure $\ket{W_{x,z}}$ with the product measurements $\{B_{x'}^W \otimes C_{z'}^W\}_{x',z'}$ to get outcomes $x' \in \CS(W)$ and $z' \in \CS(W^{\perp})$. Thus, they win with probability $\Tr\bigl[(B_x^W \otimes C_z^W) \ket{W_{x,z}}\bra{W_{x,z}}\bigr]$.
Their expected \ul{winning probability} is the average over all possible Alice's choices of $W \in G(2m,m)$, $ x \in \CS(W)$,  $z \in \CS(W^\perp)$:
\begin{equation}
    p_m= {E} \Bigl\{ \Tr\bigl[(B_x^W \otimes C_z^W) \ket{W_{x,z}}\bra{W_{x,z}}\bigr]\Bigr\},
\label{eq:wp}
\end{equation}
where $E$ denotes the expectation (average) over $W \in G(2m,m)$, $ x \in \CS(W)$,  $z \in \CS(W^\perp)$ picked uniformly at random. The game dynamics are illustrated in Figure \ref{fig:game_diagram}.
\vspace{3pt}
\begin{algorithm}[h]
\label{alg:game}
\caption{\textbf{Coset Guessing Game}}
\begin{algorithmic}
\STATE \textbf{Parameters:}
A positive integer $m \in \Z$.
\STATE \textbf{Strategy:}\\
For each subspace $W \in G(2m,m)$, Bob and Charlie select POVMs $\{B_x^W\}_{x \in \CS(W)}$ and $\{C_z^W\}_{z \in \CS(W^{\perp})}$, respectively.\\[0.5ex]
\textbf{Game Procedure:}
\begin{enumerate}
    \item Alice samples $W \in G(2m,m),x \in \CS(W),$ and $z \in \CS(W^{\perp})$ uniformly at random. She prepares the state $\ket{W_{x,z}}$, comprised of $2m$ qubits.
    \item Alice sends her state $\ket{W_{x,z}}$ to Bob and Charlie so that Bob receives the first $m$ qubits and Charlie receives the last $m$ qubits.
    \item Alice announces to Bob and Charlie which subspace $W \in G(2m,m)$ she picked.
    \item Bob and Charlie locally measure $\ket{W_{x,z}}$ with the product measurements $\{B_{x'}^W \otimes C_{z'}^W\}_{x',z'}$ to get outcomes $x' \in \CS(W)$ and $z' \in \CS(W^{\perp})$. 
    % \footnote{we can first focus on von Neumann measurements.}
\end{enumerate}
\STATE \textbf{Win Condition:} $x+x' \in W$ and $z+z' \in W^{\perp}$.
\end{algorithmic}
\end{algorithm}

\begin{example}
\label{ex:N1GameIntro}
    Recall in \Cref{ex:CosetStateIntro}, when $m = 1$, there are 12 different coset states.
% \begin{align*}
%     \ket{W_{00,00}^{(1)}} = \ket{0} \otimes \ket{+} & & \ket{W_{00,00}^{(2)}} = \ket{+} \otimes \ket{0} & & \ket{W_{00,00}^{(3)}} = \frac{1}{\sqrt{2}} \left( \ket{00} + \ket{11} \right)\\
%     \ket{W_{00,01}^{(1)}} = \ket{0} \otimes \ket{-} & & \ket{W_{00,10}^{(2)}} = \ket{-} \otimes \ket{0} & & \ket{W_{00,01}^{(3)}} = \frac{1}{\sqrt{2}} \left( \ket{00} - \ket{11} \right)\\
%     \ket{W_{10,00}^{(1)}} = \ket{1} \otimes \ket{+} & & \ket{W_{01,00}^{(2)}} = \ket{+} \otimes \ket{1} & & \ket{W_{10,00}^{(3)}} = \frac{1}{\sqrt{2}} \left( \ket{01} + \ket{10} \right)\\
%     \ket{W_{10,01}^{(1)}} = \ket{1} \otimes \ket{-} & & \ket{W_{01,10}^{(2)}} = \ket{-} \otimes \ket{1} & & \ket{W_{10,01}^{(3)}} = \frac{1}{\sqrt{2}} \left( \ket{01} - \ket{10} \right).
% \end{align*}
In this simple example, we immediately see the problem that arises when Bob and Charlie try to guess $x$ and $z$. If $W = W^{(1)}$, Bob can always guess $x$ by simply measuring his qubit in the computational basis, $\{\ket 0, \ket{1}\}$. Moreover, Charlie can always guess $z$ by measuring his qubit in the Hadamard basis $\{\ket +, \ket{-}\}$. This is not true when $W=W^{(2)}$ since $x$ affects Charlie's qubit, and $z$ affects Bob's qubit. In the case $W = W^{(3)}$, there is entanglement between Bob and Charlie that makes things more interesting. We will return to this example in \Cref{sec:optstrat}.
\end{example}

\begin{figure}[H]
    \centering
    % \resizebox{3in}{2in}{
    \begin{tikzpicture}[scale=1.06]
        \node[] at (0, 2.3) {Choose $W \in G(2m, m), x \in \CS(W)$, and $z \in \CS(W^\perp)$.};
        \node[] at (0, 1.5) {Prepare $\ket{W_{x, z}}$.};
        \node[alice,minimum size=1.5cm] (A) at (0,0) {Alice};
        \node[bob,minimum size=1.5cm] (B) at (-3,-3) {Bob};
        \node[charlie, mirrored, minimum size=1.5cm] (C) at (3,-3) {Charlie};
        \node[text width=2.5cm] at (2.9, -1.2) {\small Second $m$ qubits of $\ket{W_{x, z}}$.};
        \node[text width=2.1cm] at (-2.3, -1.2) {\small First $m$ qubits of $\ket{W_{x, z}}$.};
        \node[] at (3, -4.7) {Goal: output $z$.};
        \node[] at (-3, -4.7) {Goal: output $x$.};

        \draw[->] (A) -- (B);
        \draw[->] (A) -- (C);
    \end{tikzpicture}
    % }
    \caption{An illustration of the considered coset guessing game.}
    \label{fig:game_diagram}
\end{figure}

This game is a variant of the game in \cite{CV22}. There, Bob and Charlie can request that Alice perform a transformation $\Phi(\ket{W_{x,z}}\bra{W_{x,z}})$ before splitting and sending qubits.
For convenience, the notation used in this paper is slightly different than that used in \cite{CV22}. We write $G(2m, m)$ instead of $G(n, \frac{n}{2})$ and also use characters that are standard in coding theory. Finally, we also note that generalization of our results in Sections \ref{sec:upperbound}-\ref{sec:optstrat} to arbitrary $n$ and $k$ (where Alice randomly samples a subspace $W \in G(n,k)$ and $x,z \in \F_2^n$, sending $k$ qubits of the state $\ket{W_{x,z}}$ to Bob and $n-k$ qubits to Charlie) is straightforward.
%Notation changes are also needed since we index gates with indices instead of bit strings. For example, we denote an $X$ gate applied to the $7$-th qubit as $X_7$, or applied to the $7$-th and $9$-th qubit as $X_7X_9 = X_{7, 9}$. This is in contrast to the typical $X^{v} = X^{v_1} \otimes \dots \otimes X^{v_n}$.

\section{An Upper Bound on the Winning Probability}\label{sec:upperbound}
%Michael-TODO entire section.
\begin{theorem}
    \label{thm:pnupperbound}
    Let $p_m$ denote the supremum over all strategies of the winning probability for the coset guessing game parameterized by $m \in \N$. The following upper bound holds: 
    \[p_m \le \frac{1}{\binom{2m}{m}_2}\sum_{k=0}^m 2^{k^2} \binom{m}{k}_2^2 \left(\frac{1}{2}\right)^k = \Theta \left ( \frac{1}{2^m} \right ). \]
\end{theorem}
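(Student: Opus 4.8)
The plan is to reduce the global bound to a single-subspace estimate and then count. Since Bob and Charlie choose their POVMs $\{B_x^W\}$, $\{C_z^W\}$ independently for each $W$, the supremum in $p_m$ distributes over the average in \eqref{eq:wp}, so it suffices to bound, for every fixed $W\in G(2m,m)$, the conditional winning probability
\[ q_W \;=\; \sup_{B^W,\,C^W}\ \frac{1}{2^{2m}}\sum_{x\in\CS(W)}\sum_{z\in\CS(W^\perp)}\Tr\bigl[(B_x^W\otimes C_z^W)\ket{W_{x,z}}\bra{W_{x,z}}\bigr], \]
and then to evaluate $\frac{1}{\binom{2m}{m}_2}\sum_{W} q_W$. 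My estimate of $q_W$ will throw Charlie away entirely: $q_W$ is at most the probability that Bob alone, holding the first $m$ qubits, names the coset $W+x$. Let $\pi_B:\F_2^{2m}\to\F_2^m$ be the projection onto the first $m$ coordinates and put $b:=\dim\bigl(W\cap(\{0\}\times\F_2^m)\bigr)$, so $\dim\pi_B(W)=m-b$.

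The key observation is a symmetry of the coset states. If $e\in\{0\}\times\F_2^m$, then $\ket{W_{x+e,z}}$ equals $\ket{W_{x,z}}$ acted on by bit-flips on the Charlie register indexed by $e$ — a unitary supported on Charlie's qubits — while if $e\in W$ then $\ket{W_{x+e,z}}=\pm\ket{W_{x,z}}$. In either case Bob's reduced state $\rho^B_{x,z}=\Tr_C\ket{W_{x,z}}\bra{W_{x,z}}$ is unchanged, so $\rho^B_{x,z}$ depends on $x$ only through its coset modulo $W':=W+(\{0\}\times\F_2^m)$. Since $\dim W'=2m-b$, the subgroup $W'$ contains exactly $2^{m-b}$ cosets of $W$, so the $2^m$ cosets of $W$ partition into $2^b$ ``buckets'', and on each bucket $\beta$ (for each fixed $z$) Bob's state is a single density operator $\rho^B_{\beta,z}$. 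Using that $\{B^W_c\}_c$ is a POVM, the contribution of one bucket to the success sum is $\sum_{c\in\beta}\Tr\bigl[B^W_c\,\rho^B_{\beta,z}\bigr]=\Tr\bigl[\bigl(\sum_{c\in\beta}B^W_c\bigr)\rho^B_{\beta,z}\bigr]\le\Tr[\rho^B_{\beta,z}]=1$; summing over the $2^b$ buckets and the $2^m$ values of $z$ and dividing by $2^{2m}$ gives
\[ q_W \;\le\; \frac{2^m\cdot 2^b}{2^{2m}} \;=\; 2^{-(m-b)} \;=\; 2^{-\dim\pi_B(W)}. \]

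Next I would count $N_j:=\#\{W\in G(2m,m):\dim\pi_B(W)=j\}$. By the classification of subspaces of a direct sum, such a $W$ is the same data as a triple: $B:=\pi_B(W)\in G(m,j)$, the subspace $C\le\F_2^m$ with $W\cap(\{0\}\times\F_2^m)=\{0\}\times C$ (so $C\in G(m,m-j)$), and a linear map $\phi:B\to\F_2^m/C$, reconstructed by $W=\{(v,c):v\in B,\ c+C=\phi(v)\}$. Since $\dim B=\dim(\F_2^m/C)=j$, there are $2^{j^2}$ maps $\phi$, hence $N_j=2^{j^2}\binom{m}{j}_2^2$. Combining with the per-$W$ bound,
\[ p_m \;\le\; \frac{1}{\binom{2m}{m}_2}\sum_{j=0}^{m} N_j\,2^{-j} \;=\; \frac{1}{\binom{2m}{m}_2}\sum_{j=0}^{m} 2^{j^2}\binom{m}{j}_2^2\Bigl(\tfrac12\Bigr)^{j}, \]
which is exactly the asserted formula. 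For the $\Theta(1/2^m)$ claim I would use the standard bounds $2^{j(m-j)}\le\binom{m}{j}_2<4\cdot2^{j(m-j)}$ and $2^{m^2}\le\binom{2m}{m}_2<4\cdot2^{m^2}$: setting $i=m-j$, the $j$-th summand is $\Theta\bigl(2^{\,2jm-j^2-j}\bigr)=\Theta\bigl(2^{\,m^2-m-i(i-1)}\bigr)$, so the sum is within a constant factor of its $i=0$ term $2^{m^2-m}$, yielding simultaneously the upper bound $O(1/2^m)$ and, from that one term, the matching lower bound $\Omega(1/2^m)$.

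I expect the main obstacle to be the bucket bookkeeping in the per-$W$ estimate: carefully verifying that Bob's reduced state is genuinely constant on each bucket (both symmetries — invariance of coset states under a phase when shifting $x$ by $W$, and the $X$-action on Charlie when shifting $x$ by $\{0\}\times\F_2^m$ — are needed here), and that the POVM-completeness step is applied to distinct measurement operators. The count $N_j=2^{j^2}\binom{m}{j}_2^2$ is a routine application of the subspace-of-a-direct-sum classification (and is consistent with the $q$-Vandermonde identity $\sum_j 2^{j^2}\binom{m}{j}_2^2=\binom{2m}{m}_2$), and the asymptotics are elementary. It is worth stressing that the bound is reached by discarding Charlie's outcome altogether, so the same argument shows Bob's $W$-averaged marginal guessing probability already equals this bound — explaining why Alice's quantum message cannot raise the marginals and only correlates the two answers.
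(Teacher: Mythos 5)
Your proposal is correct and follows essentially the same route as the paper's proof: bound the joint success by Bob's marginal, observe that his reduced state depends on $x$ only through its coset modulo $W+\langle e_i\rangle_{i=m+1}^{2m}$, apply POVM completeness on each such coset to get the per-subspace bound $2^{-(m-k)}$, and then count subspaces by $k=\dim(W\cap\langle e_i\rangle_{i=m+1}^{2m})$. The only differences are cosmetic: you prove the invariance for each fixed $z$ (via an $X$-unitary on Charlie's register) rather than for the $z$-averaged state, you derive the count $2^{j^2}\binom{m}{j}_2^2$ directly from the $(B,C,\phi)$ classification instead of citing a chain-of-subspaces lemma, and you get the $\Theta(1/2^m)$ asymptotics from standard Gaussian-binomial estimates rather than the paper's ratio-plus-$q$-Vandermonde argument.
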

\begin{proof}
    Fix a subspace $W \in G(2m,m)$. We aim to show that the maximal probability of winning conditioned on the subspace $W$ being chosen is at most $\frac{1}{2^{m-k}}$, where $k=\dim(W \cap \langle e_i \rangle_{i=m+1}^{2m})$, from which the upper bound will follow by counting subspaces.

    Given $W,x$, and $z$, the selection of POVMs $\{B_x^W\}_{x \in \CS(W)}$ and $\{C_z^W\}_{z \in \CS(W^{\perp})}$ induces a probability distribution for Bob and Charlie's guesses $\widehat{x}$ and $\widehat{z}$:

    \begin{equation}
        p(\widehat{x},\widehat{z}|W,x,z) = \Tr\bigl[(B_{\widehat{x}}^W \otimes C_{\widehat{z}}^W) \ket{W_{x,z}}\bra{W_{x,z}}\bigr]
    \end{equation}
    The winning probability conditioned on the choice of subspace and coset representatives is thus $p(x,z|W,x,z)$, which  is upper bounded by Bob's marginal winning probability
    \begin{align}
    % \begin{split}
        p(x|W,x,z) &= \sum_{\widehat{z}} p(x,\widehat{z}|W,x,z) \nonumber\\
        &=  \Tr\bigl[\bigl(B_x^W \otimes \bigl( \sum_{\widehat{z}}C_{\widehat{z}}^W \bigr)\bigr) \ket{W_{x,z}}\bra{W_{x,z}}\bigr] \nonumber\\
        &= \Tr\bigl[\bigl(B_x^W \otimes Id_C \bigr) \ket{W_{x,z}}\bra{W_{x,z}}\bigr] \nonumber\\
        &= \Tr\bigl[B_x^W \Tr_C\bigl[\ket{W_{x,z}}\bra{W_{x,z}}\bigr]\bigr]
        % \end{split}
    \end{align}
Here, we used the fact that POVM elements sum to identity and properties of the partial trace $\Tr_C$ over Charlie's $m$ qubits. Averaging over $x$ and $z$, we  determine Bob's marginal winning probability conditioned solely on the choice of subspace $W$:
    \begin{align}\label{eq:qsd}
    % \begin{split}
        p^B_{\text{win}}(W) &= \sum_{x,z} p(x,z)p(x|W,x,z) \nonumber\\
        &= \frac{1}{4^m} \sum_{x,z} \Tr\bigl[B_x^W \Tr_C\bigl[\ket{W_{x,z}}\bra{W_{x,z}}\bigr]\bigr] \nonumber\\
        &= \frac{1}{2^m} \sum_x  \Tr\bigl[B_x^W \rho^B_x\bigr]
        % \end{split}
    \end{align}
    where $\rho^B_x = \frac{1}{2^m}\sum_z \Tr_C\bigl[\ket{W_{x,z}}\bra{W_{x,z}} \bigr]$. The last line of equation \eqref{eq:qsd} is now a problem of discriminating the quantum states $\{\rho_x^B\}_x$ with uniform probability for Bob. In general, optimization of quantum state discrimination requires semidefinite programming to solve. However, it turns out that many of these states end up being identical or orthogonal to each other, as shown in \Cref{lma:tracestates}:
%TODO: Change y to x'
    \begin{lemma}\label{lma:tracestates}
    For every $x, y\in \CS(W)$, the states $\rho_x^B$ and $\rho_{y}^B$ are either identical or orthogonal to each other, with equality occurring exactly when $x+y \in W + \langle e_i \rangle_{i=m+1}^{2m}$, i.e. they belong to the same coset of $W + \langle e_i \rangle_{i=m+1}^{2m}$.
    \end{lemma}
    \textit{Proof:} Let $C_u$ denote the equivalence class of vectors $v\in W$ satisfying $v_{m+1}^{2m}=u_{m+1}^{2m}$. Two vectors $u,v \in W$ lie in the same equivalence class iff $u+v \in \langle e_i\rangle_{i=1}^m$, so the number of equivalence classes is $2^r$, each with size $2^{m-r}$, where $r= m - \dim(W \cap \langle e_i \rangle_{i=1}^{m})$. Define the state $\ket{\psi_{u,z}^x} = \frac{1}{\sqrt{2^{m-r}}} \sum_{v \in C_u}(-1)^{z_1^m \cdot v_1^m} \ket{x_1^m + v_1^m}$ for an equivalence class $C_u$ and $z \in \CS(W^{\perp})$. Then, 

    \begin{align}
    % \begin{split}
        \Tr_C&\bigl[\ket{W_{x,z}}\bra{W_{x,z}}\bigr]\bigr] \nonumber\\
        &= \frac{1}{2^m}\sum_{v,w \in W} (-1)^{z \cdot (v+w)} \Tr_C\bigl[ \ket{x+v}\bra{x+w} \bigr] \nonumber\\
        &= \frac{1}{2^m} \sum_{C_u}\sum_{v,w \in C_u} (-1)^{z \cdot (v+w)} \ket{x_1^n+v_1^n}\bra{x_1^n+w_1^n}  \nonumber\\
        &= \frac{1}{2^m} \sum_{C_u}\sum_{v,w \in C_u} (-1)^{z_1^m \cdot (v_1^m+w_1^m)} \ket{x_1^n+v_1^n}\bra{x_1^n+w_1^n}  \nonumber \\
        &= \frac{1}{2^r} \sum_{C_u} \ket{\psi_{u,z}^x} \bra{\psi_{u,z}^x}
    % \end{split}
    \end{align}
    Thus $\rho_x^B = \frac{1}{2^{m+r}} \sum_z \sum_{C_u} \ket{\psi_{u,z}^x} \bra{\psi_{u,z}^x}$.  Suppose that $x + y \not \in  W + \langle e_i \rangle_{i=m+1}^{2m}$. Then, there exist no vectors $v,w \in W$ for which $x_1^m + v_1^m = y_1^m+w_1^m$, so the vectors $\ket{\psi_{u,z}^x}$ and $\ket{\psi_{u',z'}^y}$ are orthogonal due to their disjoint basis support. It follows immediately that $\rho_x^B$ and $\rho_{y}^B$ are orthogonal.

    Now, suppose that $x + y \in  W + \langle e_i \rangle_{i=m+1}^{2m}$. Let $w \in W$ be chosen so that $x_1^m + y_1^m = w_1^m$, ensuring that $(w+v)_1^m$ iterates over $\{v_1^m : v \in C_u\}$ as $v$ iterates over $C_u$. Therefore, we have
    \begin{align}
    %\begin{split}
        \ket{\psi_{u,z}^x} &= \frac{1}{\sqrt{2^{m-r}}} \sum_{v \in C_u}(-1)^{z_1^m \cdot v_1^m} \ket{x_1^m + v_1^m} \nonumber \\
        &= \frac{1}{\sqrt{2^{m-r}}} \sum_{v \in C_u}(-1)^{z_1^m \cdot v_1^m} \ket{y_1^m + w_1^m + v_1^m} \nonumber\\
        &= \frac{(-1)^{z_1^m \cdot w_1^m}}{\sqrt{2^{m-r}}} \sum_{v \in C_u}(-1)^{z_1^m \cdot (w+v)_1^m} \ket{y_1^m + (w+v)_1^m}\nonumber \\
        &= (-1)^{z_1^m \cdot w_1^m} \ket{\psi_{u,z}^y}
    %\end{split}
    \end{align}

    Thus $\rho_x^B = \frac{1}{2^{m+r}} \sum_z \sum_{C_u} \ket{\psi_{u,z}^x} \bra{\psi_{u,z}^x} = \frac{1}{2^{m+r}} \sum_z \sum_{C_u} \ket{\psi_{u,z}^y} \bra{\psi_{u,z}^y} = \rho_y^B$.

    Now, recalling $k=\dim(W \cap \langle e_i \rangle_{i=m+1}^{2m})$ and noting that any coset $C$ of $W + \langle e_i \rangle_{i=m+1}^{2m}$ contains exactly $2^{m-k}$ cosets of $W$, Bob cannot distinguish the states $\rho_x^B,\rho_y^B$ by any measurement for any coset representatives $x,y$ that belong to the same coset of $W + \langle e_i \rangle_{i=m+1}^{2m}$. Denoting the coset of $W + \langle e_i \rangle_{i=m+1}^{2m}$ containing $x$ by $\mathcal{C}_x$, we have
    \begin{align}
        p^B_{win}(W) &= \frac{1}{2^m} \sum_x  \Tr\bigl[B_x^W \rho^B_x\bigr]%\\&
        = \frac{1}{2^m} \sum_{\mathcal{C}_x} \sum_{y \in \mathcal{C}_x}  \Tr\bigl[B_y^W \rho^B_x\bigr]\nonumber\\
        &= \frac{1}{2^m} \sum_{\mathcal{C}_x}  \Tr\bigl[ \bigl( \sum_{y \in \mathcal{C}_x}B_y^W \bigr) \rho^B_x\bigr] \nonumber\\
        & \leq \frac{1}{2^m} \sum_{\mathcal{C}_x}  \Tr\bigl[\rho^B_x\bigr]%\\&
        = \frac{2^k}{2^m} = \frac{1}{2^{m-k}}
    \end{align}

    Therefore, we can upper bound Bob's marginal probability of winning for the subspace $W$, in terms of the dimension of the intersection of $W$ and $\langle e_i \rangle_{i=m+1}^{2m}$. We note that due to the orthogonality relations determined in \Cref{lma:tracestates}, Bob can achieve this winning probability with any POVM satisfying $\sum_{y \in C_x} B_y^W = 2^{m-k}\rho_x^B$ for every equivalence class $C_x$.

    Now it is a matter of counting the subspaces $W \in G(2m,m)$ with $\dim (W \cap \langle e_i \rangle_{i=m+1}^{2m}) = k$ for each $0 \leq k \leq m$. Results on enumerating subspaces over finite fields can be found throughout the literature, and we refer the reader to \cite{braun2018q} from which the following result is derived.

    \begin{lemma}
        There are $2^{(m-k)^2}\binom{m}{m-k}^2_2$ subspaces $W \in G(2m,m)$ whose intersection with $\langle e_i \rangle_{i=m+1}^{2m}$ has dimension $k$, where $\binom{m}{k}_2$ is the q-binomial coefficient.
    \end{lemma}

    \textit{Proof:} Fix a subspace $V \subseteq \langle e_i \rangle_{i=m+1}^{2m}$ of dimension $k$. Using \cite[Lemma 1]{braun2018q} on the chain of subspaces $V \subseteq \langle e_i \rangle_{i=m+1}^{2m} \subseteq \F_2^{2m}$, there are $2^{(m-k)^2}\binom{m}{m-k}_2$ subspaces $W \subseteq \F_2^{2m}$ of dimension $n$ satisfying $W\cap \langle e_i \rangle_{i=m+1}^{2m} = V$. Recalling that there are $\binom{m}{k}_2=\binom{m}{m-k}_2$ subspaces $V\subseteq  \langle e_i \rangle_{i=m+1}^{2m}$ of dimension $m$ proves the lemma.
    
    We express the upper bound for the winning probability as
    \begin{align}
    %\begin{split}
        p_m &\leq \frac{1}{\binom{2m}{m}_2}\sum_{k=0}^m 2^{(m-k)^2}\binom{m}{m-k}_2^2 \left(\frac{1}{2} \right)^{m-k} \nonumber\\
        &= \frac{1}{\binom{2m}{m}_2}\sum_{k=0}^m 2^{k^2}\binom{m}{k}_2^2 \left( \frac{1}{2} \right)^k
    %\end{split}
    \end{align}

    \Cref{lma:rate} describes the rate of decay for the upper bound:

    \begin{lemma}\label{lma:rate}
    The asymptotic behavior of the bound satisfies $\frac{1}{\binom{2m}{m}_2}\sum_{k=0}^m 2^{k^2}\binom{m}{k}_2^2 \left(\frac{1}{2} \right)^{k} = \Theta\bigl(\frac{1}{2^m}\bigr)$.
    \end{lemma}
    \textit{Proof:} Let $f(m,k)=\frac{2^{k^2}\binom{m}{k}_2^2}{\binom{2m}{m}_2}$. The q-analog of Vandermonde's identity states that \[\binom{2m}{m}_2 = \sum_{k=0}^m 2^{k^2}\binom{m}{k}_2^2\] and so $f(m,k) \leq 1$ for all $0 \leq k \leq m$. By computing the ratio $\frac{f(m,k+1)}{f(m,k)}$ for $0 \leq k \leq m-2$, we observe that

    \begin{align}
    %\begin{split}
        \frac{f(m,k+1)}{f(m,k)} &= \frac{2^{(k+1)^2}\binom{m}{k+1}_2^2}{2^{k^2}\binom{m}{k}_2^2} \nonumber\\
        &= 2^{(k+1)^2 - k^2} \left( \frac{\prod_{i=0}^k \frac{2^{m-i}-1}{2^{k+1-i}-1}}{\prod_{i=0}^{k-1} \frac{2^{m-i}-1}{2^{k-i}-1}} \right)^2 \nonumber\\
        &= 2^{2k+1}\left( \frac{2^{m-k} - 1}{2^{k+1}-1} \right)^2 \nonumber\\
        &\geq \frac{1}{2}(2^{m-k}-1)^2 \geq \frac{9}{2}.
    %\end{split}
    \end{align}

    It follows inductively that $f(m,k) \leq (\frac{2}{9})^{m-k-1}f(m,m-1)$ for $0\leq k \leq m-2$, so we have

    \begin{align}\label{eq:bigO}
    % \begin{split}
        \frac{1}{\binom{2m}{m}_2}&\sum_{k=0}^m 2^{k^2}\binom{m}{k}_2^2 \left(\frac{1}{2} \right)^{k} =\sum_{k=0}^m f(m,k) \left(\frac{1}{2} \right)^{k} \nonumber\\
        &\leq \sum_{k=0}^{m-1} f(m,m-1)  \left(\frac{2}{9} \right)^{m-k-1}\left(\frac{1}{2} \right)^{k} + \frac{f(m,m)}{2^m} \nonumber\\
        &\leq \left( \frac{2}{9}\right)^{m-1}\sum_{k=0}^{m-1}  \left(\frac{9}{4} \right)^{k}+ \frac{1}{2^m} \nonumber\\
        &= \left( \frac{2}{9}\right)^{m-1} \frac{\left( \frac{9}{4}\right)^{m} - 1}{\frac{9}{4}-1} + \frac{1}{2^m} \nonumber\\
        & \leq \frac{9}{2}\left(\frac{1}{2^m} \right) + \frac{1}{2^m} = \frac{11}{2}\left(\frac{1}{2^m} \right) = O(\frac{1}{2^m}).
    % \end{split}
    \end{align}

 By the q-analog of Vandermonde's identity \[\frac{1}{\binom{2m}{m}_2}\sum_{k=0}^m f(m,k)\left(\frac{1}{2}\right)^k \!\!\! \geq \frac{1}{\binom{2m}{m}_2}\sum_{k=0}^m f(m,k)\left(\frac{1}{2}\right)^m \! \! = \frac{1}{2^m}.\]
 
 Thus, the upper bound asymptotically satisfies $\frac{1}{\binom{2m}{m}_2}\sum_{k=0}^m 2^{k^2}\binom{m}{k}_2^2 \left(\frac{1}{2} \right)^{k} = \Theta(\frac{1}{2^m})$. We remark that the upper bound constant of $\frac{11}{2}$ in \eqref{eq:bigO} can be made smaller by a more careful derivation.
\end{proof}

\section{Coset State Encoding and Representation}
\label{sec:cosetEncoding}
The rest of this paper proves the bound in \Cref{thm:pnupperbound} is achievable. In this section, we first provide an efficient circuit for constructing coset states, and in Section \ref{sec:optstrat}, we present an optimal winning strategy based on the derived representation.
In the next subsection, we devise a $\CNOT$ and $H$ gates circuit $N_W$ to encode $\ket{0}^{\otimes 2m}$ into $\ket{W}$. We then show that we can pick coset representatives $x$ \& $z$ such that $\supp(x) \cap \supp(z)=\emptyset$ and
\[
\ket{W_{x, z}}  %= X_{\supp(x)}Z_{\supp(z)}\ket{W}
					%=X_{\supp(x)}Z_{\supp(z)}\cdot N\cdot \ket{0}^{\otimes n}
				 = N_W\ket{x+z}.
\]
The main idea of having this representation is to enable Bob and Charlie to invert $N_W$ by local actions and then measure to obtain the bits of $x + z$, which is only partially doable when $N_W$ contains nonlocal actions. The details are in \Cref{sec:optstrat}.

These ideas are, to some extent, coding-theoretic.
The reader familiar with CSS codes may recognize that the CSS$(C_1,C_2)$ with $C_1=\mathbb{F}_2^{2m}$ and $C_2=W$ could see
1) $\ket{W}$  as the image of $\ket{0}^{\otimes 2m}$ and 2) the coset states as the error states $X$ and $Z$: $\ket{W_{x, z}} =X_{\supp(x)}Z_{\supp(z)}\ket{W}$. Under this interpretation, Bob's task would be to decode $X$ errors and Charlie's would be to decode $ Z$ errors\rev{, each using only local measurements}. For global decoding, a coding theorist would pick minimum-weight coset representatives $x$ and $z$ for maximum likelihood decoding. \rev{While our decoder does not have a high likelihood of accurate decoding, it is comprised of entirely local operations and is the optimal decoder that does so.}

%However, as explained below, our optimal strategy benefits from a different choice of coset representatives.

\subsection{Subspace State Encoding}
\label{subsec:ConstructingCosetState}

In this subsection, we devise a $\CNOT$ and $H$ gates circuit $N_W$ to map (encode) the state $\ket{0}^{\otimes 2m}$ into a subspace state $\ket{W}$. The main result is shown in \Cref{prop:ketWGoodRep}. We explain the idea by considering a subspace state corresponding to a subspace of $\F_2^{2m}$ with a single generator $v_i$. 

Let $v_i \in \F_2^{2m}$ be a binary string of length $m$, whose first nonzero entry lies at the index $i$. Let $S_i$ denote the support of the vector $v_i$, and let $B_i = S_i \backslash \{i\}$. Then the circuit $\CNOT_{i,B_i}H_i$ (where the index $i$ is the control bit and the indices $j \in B_i$ are the target bits) performs the following mapping:
\[
\ket{0}^{\otimes 2m} \overset{\CNOT_{i,B_i}H_i}{\mapsto} \frac{1}{\sqrt{2}} \bigl( \ket{0}^{\otimes 2m} + \ket{v_i} \bigr)
\]
Moreover, given any vector $c \in \F_2^{2m}$ whose $i$-th bit is zero, the unitary transformation $\CNOT_{i,B_i}H_i$ maps the vector 
\begin{equation}
    \label{eq:generateSuperPosition}
    \ket{c} \mapsto \frac{1}{\sqrt{2}} \bigl( \ket{c} + \ket{c+v_i} \bigr).
\end{equation} Thus, given another vector $v_j \in \F_2^{2m}$ whose first non-zero entry lies at index $j$, we can generate a uniform superposition of the subspace generated by $v_i$ and $v_j$ via
\begin{align*}
    \ket{0}^{\otimes 2m} &\overset{\CNOT_{i,B_i}H_i}{\mapsto} \frac{1}{\sqrt{2}} \bigl( \ket{0}^{\otimes 2m} + \ket{v_i} \bigr)\\ &\overset{\CNOT_{j,B_j}H_j}
    {\mapsto} \frac{1}{2} \bigl( \ket{0}^{\otimes 2m} + \ket{v_j} + \ket{v_i} + \ket{v_i+v_j} \bigr)
\end{align*}
as long as the $i$-th bit of $v_j$ is zero. We can continue this process with any number of vectors to generate the subspace spanned by those vectors.
%basis states labeled by the vectors of the two-dimensional subspace of $\F_2^n$ spanned by $a_i$ and $a_j$ using the unitary transformation $\CNOT_{j,B_j}H_j\CNOT_{i,B_i}H_i$:

%with $i \notin B_j$ and $j \notin B_i$, we can generate

We will use this method to generate $\ket{W}$, the uniform superposition of the computational basis vectors with labels in $W$. Let $A_W$ be the matrix whose rows generate $W$. We will consider $A_W$ in reduced-row echelon form and refer to it as the generator matrix of $W$. Since $A_W$ is in reduced row echelon form, the index of the first $1$ appearing in the bit string of a row vector $A_W$ is a pivot column. This means this bit is zero for all other row vectors, so we can apply \Cref{eq:generateSuperPosition} when needed.
We are now ready to describe the circuit that maps $\ket{0}^{\otimes 2m}$ into $\ket{W}$.

\begin{proposition}
    \label{prop:ketWGoodRep}
     Given $W \in G(2m, m)$, let $A_W$ be the generator matrix of $W$ in reduced row echelon form. Let $I$ denote the indices of the pivot columns of $A_W$ and let $J=\{(i,j):i \in I, j \in I^c, A_{ij}=1\}$. Then
    \[\ket W = \CNOT_J H_I \ket{0}^{\otimes 2m}.\]
\end{proposition}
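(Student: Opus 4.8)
The plan is to compute $\CNOT_J H_I \ket{0}^{\otimes 2m}$ directly, following the one-generator-at-a-time intuition developed before the statement, and to recognize the output as the uniform superposition over $W$. Write $I=\{i_1<\dots<i_m\}$ for the pivot columns and, for $i\in I$, let $v_i$ be the row of $A_W$ with pivot in column $i$. From the reduced row echelon form, $v_i$ has a $1$ in column $i$, a $0$ in every other pivot column, and its remaining $1$'s lie exactly at the non-pivot columns $j$ with $(i,j)\in J$; hence $\supp(v_i)\cap I=\{i\}$, $\supp(v_i)\setminus\{i\}=\{j:(i,j)\in J\}$, and $\{v_i:i\in I\}$ is a basis of $W$.

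First I would record two elementary structural facts. (1) The gates $\{\CNOT_{i,j}:(i,j)\in J\}$ pairwise commute: every control lies in $I$ and every target lies in $I^c$, so no gate's target coincides with another gate's control (nor one's control with another's target), which is the only obstruction to commuting two $\CNOT$s; thus $\CNOT_J$ is unambiguously defined. (2) No qubit in $I$ is ever a $\CNOT$ target, so the $H_I$ layer, applied first to the all-zero state, acts only on the pivot coordinates and leaves the non-pivot coordinates equal to $0$. Concretely, $H_I\ket{0}^{\otimes 2m}=\frac{1}{\sqrt{2^m}}\sum_{y}\ket{y}$, where $y$ ranges over the $2^m$ bit strings in $\F_2^{2m}$ supported inside $I$.

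Next I would push $\CNOT_J$ through this superposition termwise. On a basis vector $\ket{y}$ with $\supp(y)\subseteq I$: the pivot coordinates are unchanged (pivot qubits occur only as controls, and controls of a $\CNOT$ are untouched), while for each non-pivot $j$ the coordinate becomes $0+\sum_{i:(i,j)\in J} y_i=\sum_{i:A_{ij}=1} y_i$ — here I use that the target qubits start at $0$ and that commutativity lets each target simply accumulate the parity of its controls. Hence $\CNOT_J\ket{y}=\ket{\sum_{i\in I} y_i\,(e_i+\sum_{j:A_{ij}=1} e_j)}=\ket{\sum_{i\in I} y_i v_i}$. Summing over $y$ and using that $y\mapsto\sum_{i\in I} y_i v_i$ is a bijection onto $W$ gives $\CNOT_J H_I\ket{0}^{\otimes 2m}=\frac{1}{\sqrt{2^m}}\sum_{u\in W}\ket{u}=\ket{W}$.

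The only real work is the middle step: checking that $\CNOT_J$ realizes exactly the $\F_2$-linear map $\ket{y}\mapsto\ket{\sum_i y_i v_i}$ on inputs supported in $I$. This is where a careless argument could go wrong (e.g.\ if one forgot that a single non-pivot column may be a target of several $\CNOT$s), but it reduces entirely to the commutativity observation (1) together with the zero-initialization of the non-pivot qubits from (2). An alternative, more pedestrian route mirrors \eqref{eq:generateSuperPosition}: induct on $t$, applying $\CNOT_{i_t,B_{i_t}}H_{i_t}$ to the span of $v_{i_1},\dots,v_{i_{t-1}}$, with the maintained hypothesis that coordinate $i_t$ is identically $0$ across the current superposition — which holds because $i_t$ is never a $\CNOT$ target and $H_{i_t}$ has not yet been applied. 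I would present the direct computation as the main argument and note the inductive version as a consistency check.
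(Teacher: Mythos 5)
Your proof is correct. It rests on the same two structural facts the paper uses --- every control of $\CNOT_J$ lies in $I$, every target in $I^c$, so the gates commute and the pivot qubits are never targets --- but you organize the computation differently. The paper commutes each $H_{i}$ rightward until it sits next to its own block $\CNOT_{i,B_i}$, rewriting the circuit as $\prod_{i\in I}\CNOT_{i,B_i}H_i$, and then builds $\ket{W}$ one generator at a time via the doubling step \eqref{eq:generateSuperPosition}; the correctness of each step uses that the pivot bit of $v_j$ is zero in every previously generated vector. You instead apply all Hadamards first, obtaining the uniform superposition over strings $y$ supported in $I$, and then evaluate $\CNOT_J$ in one shot as the $\F_2$-linear encoding map $\ket{y}\mapsto\ket{\sum_{i\in I}y_i v_i}$, finishing by observing this map is a bijection onto $W$. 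Your route is slightly more direct and makes the ``$\CNOT_J$ is the generator-matrix encoder'' picture explicit, which is a nice way to see why the RREF hypothesis matters (it is exactly what guarantees the targets are initialized to zero and disjoint from the controls); the paper's route is the one that generalizes most transparently to the later commutation arguments in \Cref{prop:BCrepForWxz}. Since you also flag the inductive version as a consistency check, the two proofs are fully reconcilable, and I see no gap in yours.
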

\begin{proof}
    The first entry of a pair in $J$ is always a pivot. For a fixed pivot $i_0 \in I$, there is exactly one row-vector $a_{i_0}$ in $A_W$ with a $1$ at index $i_0$. The second entries in $J$, $\{j : (i_0, j) \in J\}$, form the support of the row vector $a_{i_0}$, excluding the pivot $i_0$. We can therefore decompose $\CNOT_J$ as a product in terms of $B_i = \supp(a_i) \setminus \{i\}$: \[\prod_{i \in I} \CNOT_{i, B_i}.\]
    The targets of these $\CNOT$ gates never contain an element of $I$, and each control bit is in $I$. Thus, for a specific $i_0 \in I$ we can commute $H_{i_0}$ with all gates except for $\CNOT_{i_0, B_{i_0}}$. This allows us to write
    \[\CNOT_J H_I \ket{0}^{\otimes 2m} = \prod_{i \in I} \CNOT_{i, B_i} H_i \ket{0}^{\otimes 2m}.\]
    Using \Cref{eq:generateSuperPosition}, we end up with a superposition of the subspace generated by all $a_i$, which is exactly $\ket{W}$.
\end{proof}

\subsection{Choosing Coset Representatives}
\rev{
Coset representatives in classical coding theory are minimum weight words since they correspond to most likely errors on memoryless channels. Here, different properties of coset representatives are helpful.}
\Cref{prop:ketWGoodRep} explains how we can see Alice's preparation of $\ket{W}$ as passing the $\ket{0}^{\otimes 2m}$ state through a circuit of depth at most $m + 1$ to create $\ket{W}$. The circuit consists of Hadamard gates followed by $\CNOT$ gates. Alice can thus prepare the coset state
as
\begin{align}
 \label{eq:csN}   
% \begin{split}
\ket{W_{x,z}} &=  X_{\supp(x)}Z_{\supp(z)}\ket{W} \nonumber\\
&=X_{\supp(x)}Z_{\supp(z)}\CNOT_J H_I \ket{0} ^{\otimes 2m} 
% \end{split}
\end{align}
for a given choice of $x$ and $z$. 

Recall that any choice $x'\in C_x$ and $ z'\in C_z$ results in the same quantum state in \eqref{eq:csN}. This section describes how we can select coset representatives that will allow us to represent coset states as 
\begin{equation}
\label{eq:csF}
    \ket{W_{x,z}} = \CNOT_J H_I \ket {x+z}.
\end{equation}
The usefulness of this representation will become clear in the following section, which presents an optimal winning strategy. There are two features of the coset representatives $x$ and $z$ we chose that are essential the representation \eqref{eq:csF}: the supports $\supp(x)$ and $\supp(z)$ are disjoint and allow us to derive \eqref{eq:csF} by using the standard (commutation) relations $(I \otimes X)\CNOT = \CNOT(I \otimes X)$,  $(Z \otimes I)\CNOT = \CNOT(Z \otimes I)$, and $ZH=HX$.

\begin{proposition}
    \label{prop:CSGoodReps}
    Given $W \in G(2m, m)$, let $A_W$ be the generator matrix of $W$ in reduced row echelon form. 
    Let $I$ denote the indices of the pivot columns of $A_W$. Then 
    \[
        \CS(W) = \langle e_i \rangle_{i \in I^c}
        ~~\text{and} ~~
        \CS(W^\perp) = \langle e_i \rangle_{i \in I}
    \]
    are coset representatives sets of $W$ and $W^\perp$.
\end{proposition}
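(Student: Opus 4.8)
The plan is to verify two things: first, that the proposed sets $\langle e_i\rangle_{i\in I^c}$ and $\langle e_i\rangle_{i\in I}$ have the correct cardinality to be coset representative sets (namely $2^m$ elements each, since $|I| = |I^c| = m$ because $A_W$ is an $m \times 2m$ matrix in reduced row echelon form with $m$ pivots); and second, that distinct elements of each proposed set lie in distinct cosets. Since the number of cosets of $W$ in $\F_2^{2m}$ is exactly $2^m = |\langle e_i\rangle_{i\in I^c}|$, it suffices to prove the injectivity statement: if $a, b \in \langle e_i\rangle_{i \in I^c}$ with $a \neq b$, then $a + b \notin W$, and similarly for $W^\perp$ with the index set $I$.

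For the $W$ claim, I would argue as follows. A nonzero vector $c = a + b$ with $a,b \in \langle e_i\rangle_{i\in I^c}$ is a nonzero vector supported entirely on the non-pivot coordinates $I^c$. Suppose for contradiction $c \in W$. Then $c$ is a nonzero combination of the rows of $A_W$; say $c = \sum_{i \in S} a_i$ for some nonempty $S \subseteq I$ (indexing rows by their pivot columns). But each row $a_i$ has a $1$ in pivot position $i$ and $0$ in every other pivot position (this is the defining property of reduced row echelon form). Hence $\sum_{i\in S} a_i$ has a $1$ in position $\min S \in I$ — more simply, it has a $1$ in every pivot position belonging to $S$ — so $c$ has nonzero entries among the pivot coordinates, contradicting $\supp(c) \subseteq I^c$. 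Therefore $c \notin W$, giving injectivity, and combined with the cardinality count, $\langle e_i\rangle_{i\in I^c}$ is a full set of coset representatives for $W$.

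For the $W^\perp$ claim, the cleanest route is to first recall (or briefly establish) that $W^\perp$ has dimension $m$ and that its generator matrix in reduced row echelon form has pivot columns exactly $I$ — this is the standard fact that if $W$ has an RREF generator matrix with pivot set $I$, then $W^\perp$ has an RREF generator matrix whose pivot set is the complementary set $I^c$; wait — I need to be careful about the direction here. The dual of a code with pivots $I$ has a parity-check-to-generator relationship; concretely, writing $A_W = [\,\text{stuff}\,]$ after column permutation sends pivots to the front, $W$ looks like $[\,\mathbb{1}_m \mid P\,]$ and $W^\perp$ looks like $[\,P^\top \mid \mathbb{1}_m\,]$, so the pivots of $W^\perp$ are the columns in $I^c$. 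Hmm, but the proposition claims $\CS(W^\perp) = \langle e_i\rangle_{i\in I}$, which would be consistent with $W^\perp$ having pivot set $I^c$ and the same argument as above applied to $W^\perp$ (coset reps live on the non-pivot coordinates of $W^\perp$, which are $I$). So the argument is: apply the already-proven first half to $W^\perp$ in place of $W$, using that the pivot set of (the RREF generator matrix of) $W^\perp$ is $I^c$; then the coset representatives of $W^\perp$ are $\langle e_i\rangle_{i \in (I^c)^c} = \langle e_i\rangle_{i\in I}$.

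The main obstacle I anticipate is pinning down precisely the relationship between the pivot columns of $W$ and those of $W^\perp$ — i.e., justifying cleanly that $\mathrm{pivots}(W^\perp) = I^c$. The slick way is the column-permutation normal form: there is a permutation matrix $\Pi$ (permuting coordinates so that $I$ becomes $\{1,\dots,m\}$) under which $W$ has generator $[\,\mathbb{1}_m \mid P\,]$ for some $m\times m$ matrix $P$, and then $W^\perp$ has generator $[\,P^\top \mid \mathbb{1}_m\,]$, whose RREF has pivots in the last $m$ positions, i.e.\ back in the original coordinates, in $I^c$. One should double-check that permuting coordinates commutes appropriately with "taking $\perp$" and with "being in RREF with a given pivot set," but these are routine. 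Everything else — the cardinality count and the RREF pivot-position argument showing a nonzero codeword cannot avoid all pivot coordinates — is short and mechanical.
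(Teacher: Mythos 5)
Your first half is exactly the paper's argument: the paper also proves $\CS(W)=\langle e_i\rangle_{i\in I^c}$ by observing that $W\cap\langle e_i\rangle_{i\in I^c}=\{0\}$ (every nonzero codeword has a $1$ in some pivot position, by the RREF structure) and then counting $2^m$ representatives against $2^m$ cosets.

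The second half as written contains a false step. You claim that the RREF generator matrix of $W^\perp$ has pivot set $I^c$ and propose to apply the first half to $W^\perp$ with that pivot set. This is not true in general: take $m=1$ and $W=\{00,11\}$, so $I=\{1\}$ and $W^\perp=W$; the RREF generator of $W^\perp$ is $(1\ 1)$, whose pivot column is $1\in I$, not $I^c=\{2\}$. Applying your first half to $W^\perp$ literally would then produce $\langle e_2\rangle$ as the transversal rather than the claimed $\langle e_1\rangle$ (both happen to be valid transversals in this example, but your derivation would not establish the specific one the proposition asserts). What you actually need --- and what the paper's terse ``the same argument holds for $W^\perp$ and $I$'' is implicitly invoking --- is only that $W^\perp\cap\langle e_i\rangle_{i\in I}=\{0\}$, i.e., that every nonzero element of $W^\perp$ has a $1$ in some coordinate of $I^c$. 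Your own column-permuted normal form gives this immediately with no claim about RREF pivots: every element of $W^\perp$ has the form $c^\top(P^\top\mid \text{identity})$, whose restriction to the $I^c$ block is $c^\top$ itself, nonzero whenever $c\neq 0$. Alternatively, argue directly: if $v\in W^\perp$ is nonzero with $\supp(v)\subseteq I$, pick $i\in\supp(v)$ and pair $v$ with the row $a_i$ of $A_W$; since $a_i$ restricted to the pivot coordinates equals $e_i$, you get $a_i\cdot v=v_i=1$, contradicting $v\in W^\perp$. With either one-line repair your proof is complete and coincides with the paper's.
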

\begin{proof}
    %Each $e_i$ indexed by $I^c$ is not in $W$, and the difference of any nonzero $e_i, e_j$ with $i, j \in I^c$ is not in $\langle e_i \rangle_{i \in I^c}$. Thus, $W + e$ as $e$ varies in $\langle e_i \rangle_{i \in I^c}$ describes all $2^{n - k}$ cosets of $W$. The same argument holds for $W^\perp$ and $I$ since $W \oplus W^\perp = \F_{2}^n$.
    Each $e_i$ indexed by $I^c$ is not in $W$, and $W \cap \langle e_i \rangle_{i \in I^c} = \{0\}$ since every non-zero vector in $W$ has a $1$ in some pivot index. Thus, $W + e$ as $e$ varies in $\langle e_i \rangle_{i \in I^c}$ describes all $2^{m}$ cosets of $W$. The same argument holds for $W^\perp$ and $I$.
\end{proof}
\begin{example}
In classical and quantum error correction, the coset representatives of interest are those of minimum weight.
    Consider a matrix of code generators $[5,2]$:\[
	A_W = 
	\begin{bmatrix}
		1 & 0 & 1&1&0\\
		0 & 1 & 1 & 0 & 1
	\end{bmatrix}
\]
Observe that $I=\{1,2\}$ and $I^c=\{3,4,5\}$.
A standard array (in coding theory) for this code is
{\small 
\[
\begin{array}{c|ccc}
\colorbox{red!28}{00000} & 01101	& 10110	& 11011\\
\hline
10000	& 11101	& \colorbox{red!28}{00110}	& 01011\\
01000	& \colorbox{red!28}{00101}	& 11110	& 10011\\
\colorbox{red!28}{00100} & 01001	& 10010	& 11111\\
\colorbox{red!28}{00010} & 01111	& 10100	& 11001\\
\colorbox{red!28}{00001}& 01100	& 10111	& 11010\\
11000	& 10101	& 01110	& \colorbox{red!28}{00011}\\
10001	& 11100	& \colorbox{red!28}{00111}	& 01010
\end{array}
\]}
The first row in the array is the code, and each subsequent row is a different coset. The leftmost column lists the minimum weight coset representatives used in error correction. For the game we consider, we select coset representatives with zeros at the first two positions. Therefore,
Bob's $\CS(W) $ will use the highlighted strings as coset representatives.
\end{example}
\subsection{Coset State Representation}
Using the representation of $\ket{W}$ from \Cref{prop:ketWGoodRep} and the disjoint representatives from \Cref{prop:BCrepForWxz}, we can write
\[\ket{W_{x,z}} = X_{\supp(x)}Z_{\supp(z)}\CNOT_{J}H_{I}\ket{0}^{\otimes 2m}.\]
We now present an equivalent description of the state that benefits Bob and Charlie's strategy.
\begin{proposition}
    \label{prop:BCrepForWxz}
     Given $W \in G(2m, m)$, let $A_W$ be the generator matrix of $W$ in reduced row echelon form. Let $I$ denote the indices of the pivot columns of $A_W$ and let $J=\{(i,j):i \in I, j \in I^c, A_{ij}=1\}$. Given $\langle e_i \rangle_{i \in I}$ and $\langle e_i \rangle_{i \in I^c}$, we have
    \begin{align*}
        \ket{W_{x,z}} &= \CNOT_{J}H_{I}\ket{x+z}
    \end{align*}
\end{proposition}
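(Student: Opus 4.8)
The plan is to reduce the claim to a sequence of elementary gate-commutation identities, starting from the already-established representation $\ket{W_{x,z}} = X_{\supp(x)} Z_{\supp(z)} \CNOT_J H_I \ket{0}^{\otimes 2m}$ of \eqref{eq:csN} together with \Cref{prop:ketWGoodRep}. The essential point is that, with the coset representatives chosen as in \Cref{prop:CSGoodReps}, namely $x \in \langle e_i\rangle_{i\in I^c}$ and $z \in \langle e_i\rangle_{i\in I}$, the $X$-part is supported entirely on non-pivot indices $I^c$ and the $Z$-part entirely on pivot indices $I$, so the two supports are disjoint and each interacts with $\CNOT_J H_I$ in a controlled way.

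First I would push the $Z$ operators through. Since $\supp(z)\subseteq I$, the operator $Z_{\supp(z)}$ acts only on control qubits of the $\CNOT$ gates in $\CNOT_J$; using $(Z\otimes I)\CNOT = \CNOT(Z\otimes I)$ on the control line we may move $Z_{\supp(z)}$ to the right past $\CNOT_J$ unchanged. It then meets $H_I$; on each index $i\in\supp(z)\subseteq I$ we have $ZH = HX$, so $Z_{\supp(z)} H_I = H_I X_{\supp(z)}$, and $X_{\supp(z)}$ acts on $\ket{0}^{\otimes 2m}$ to produce $\ket{z}$ (recall $z$ is a sum of $e_i$'s with $i\in I$). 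Next I would push the $X$ operators through. Since $\supp(x)\subseteq I^c$, the operator $X_{\supp(x)}$ acts only on target qubits of $\CNOT_J$; using $(I\otimes X)\CNOT = \CNOT(I\otimes X)$ on the target line, $X_{\supp(x)}$ commutes past $\CNOT_J$ unchanged. It then commutes past $H_I$ trivially because $\supp(x)\subseteq I^c$ is disjoint from $I$, and finally $X_{\supp(x)}\ket{0}^{\otimes 2m} = \ket{x}$. Combining, and using that $\supp(x)\cap\supp(z)=\emptyset$ so that $X_{\supp(x)}\ket{z} = \ket{x+z}$, gives $\ket{W_{x,z}} = \CNOT_J H_I \ket{x+z}$.

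The one subtlety to get right — and the place I expect to have to be careful — is the interaction of $X_{\supp(x)}$ with $H_I$ and of $Z_{\supp(z)}$ with $H_I$: one must check that the supports land on the correct (target vs.\ control, or simply disjoint) qubit lines so that the clean commutation relations $(I\otimes X)\CNOT=\CNOT(I\otimes X)$, $(Z\otimes I)\CNOT=\CNOT(Z\otimes I)$, and $ZH=HX$ apply verbatim, with no stray phases or bit-flips on the wrong register. This is exactly guaranteed by \Cref{prop:CSGoodReps}: pivot columns carry the $Z$-representatives and control the $\CNOT$s, while non-pivot columns carry the $X$-representatives and are only ever targets. Once that bookkeeping is laid out, the computation is a short chain of substitutions and the proposition follows; the remaining steps are routine.
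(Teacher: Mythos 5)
Your proof is correct and takes essentially the same route as the paper's own: starting from $X_{\supp(x)}Z_{\supp(z)}\CNOT_J H_I\ket{0}^{\otimes 2m}$, commute $Z_{\supp(z)}$ (supported on the control indices $I$) and $X_{\supp(x)}$ (supported on the target indices $I^c$) past $\CNOT_J$, then use $ZH=HX$ together with disjointness of the supports to land $X_{\supp(x)}X_{\supp(z)}$ on $\ket{0}^{\otimes 2m}$ and obtain $\ket{x+z}$. If anything, your bookkeeping of which support sits on control versus target lines is stated more carefully than in the paper's write-up.
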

\begin{proof}
    Consider the following identities:
    \begin{align*}(I \otimes X)\CNOT = \CNOT(I \otimes X)\\
    (Z \otimes I)\CNOT = \CNOT(Z \otimes I)\end{align*}
    In particular, we have $Z_{\supp(z)}\CNOT_J = \CNOT_JZ_{\supp(z)}$ since $\supp(z) \subseteq I^c$ and the controls of $\CNOT_{J}$ are indexed only with $i \in I$. Similarly, $X_{\supp(x)}\CNOT_J  = \CNOT_J X_{\supp(x)}$ since $\supp(x) \subseteq I$ and the target of $\CNOT_J$ are indexed with $i \in I^c$. Thus,
    \[\ket{W_{x, z}} = \CNOT_J X_{\supp(x)} Z_{\supp(z)} H_I \ket{0}^{\otimes 2m}.\] Finally, we have that $X_{\supp(x)}H_I = H_IX_{\supp(x)}$ since $\supp(x) \subseteq I^c$ is disjoint from $I$, and we use the identity $ZH = HX$ to get
    \[\ket{W_{x, z}} = \CNOT_J H_I X_{\supp(x)} X_{\supp(z)} \ket{0}^{\otimes 2m}.\]
    Finally, since we are working over $\F_2$, we have $X_x X_z \ket{0}^{\otimes 2m} = \ket{x + z}$ as desired.
\end{proof}

% The usefulness of this description of $\ket{W_{x, z}}$ is seen in the fact that both Bob and Charlie can read $I$ and $J$ from $W$'s generator matrix $A_W$, which is given to them at the start of the game. This allows them to easily create the state.
%\enote{That is the reason that they can use this description. The description is useful because of the state processing it allows them to do.}

\section{An Optimal Winning Strategy}\label{sec:optstrat}
This section uses the representations of \Cref{sec:cosetEncoding} to prove that the upper bound in \Cref{thm:pnupperbound} is tight. We first motivate the strategy by considering \Cref{ex:CosetStateIntro}. We then explicitly describe Bob and Charlie's steps to win the game and prove that the winning probability saturates the bound. 

\subsection{Strategy Motivation and Steps
\label{sec:motstrat}}
\begin{example}
\label{ex:N1ExampleContinued}
    To motivate our optimal strategy, we recall \Cref{ex:CosetStateIntro}, which describes the possibilities for $\ket{W_{x, z}}$ when $m = 1$. For now, consider the two cases where Bob and Charlie have no shared entanglement. To reiterate, the best strategy Bob and Charlie can deploy for $W^{(1)}$ will win with certainty by measuring in the computational and Hadamard bases, respectively. Meanwhile, the best strategy Bob and Charlie can deploy for $W^{(2)}$ will win with probability $\frac{1}{2}$. Even though the marginal probabilities of Bob and Charlie guessing the correct $x$ and $z$ separately is $\frac{1}{2}$, they can form a strategy in which their joint probability of winning is also $\frac{1}{2}$. We can formally write down the POVM's for both these cases:
    \begin{align*}
    B^{W^{(1)}}_{00} = \ket{0}\bra{0} & & B^{W^{(2)}}_{00} = \ket{+}\bra{+}\\
    B^{W^{(1)}}_{10} = \ket{1}\bra{1} & & B^{W^{(2)}}_{01} = \ket{-}\bra{-}\\
    C^{W^{(1)}}_{00} = \ket{+}\bra{+} & & C^{W^{(2)}}_{00} = \ket{0}\bra{0}\\
    C^{W^{(1)}}_{01} = \ket{-}\bra{-} & & C^{W^{(2)}}_{10} = \ket{1}\bra{1}
\end{align*}
This table formalizes that, in the case of $W = W^{(1)}$, Bob should guess $00$ if he measures $\ket 0$, and $10$ otherwise. Similarly, Charlie should choose $00$ if he measures $\ket{+}$, and $01$ otherwise. On the other hand, the POVMs for $W = W^{(2)}$ are chosen so that Bob and Charlie either both make an incorrect guess or both make a correct guess.
Recall now the case in which Bob and Charlie are given entangled qubits:
\begin{align*}
    \ket{W_{00,00}^{(3)}} = \frac{1}{\sqrt{2}} \left( \ket{00} + \ket{11} \right) = \ket{\Phi^+}\\
    \ket{W_{00,01}^{(3)}} = \frac{1}{\sqrt{2}} \left( \ket{00} - \ket{11} \right) = \ket{\Phi^-}\\
     \ket{W_{10,00}^{(3)}} = \frac{1}{\sqrt{2}} \left( \ket{01} + \ket{10} \right) = \ket{\Psi^+}\\
     \ket{W_{10,01}^{(3)}} = \frac{1}{\sqrt{2}} \left( \ket{01} - \ket{10} \right) = \ket{\Psi^-}
\end{align*}
\Cref{thm:pnupperbound} gives an upper bound of $\frac{2}{3}$ for Bob and Charlie's winning probability. Assuming this bound is tight, this means that Bob and Charlie should be able to win in the case $W = W^{(3)}$ with probability $\frac{1}{2}$. Indeed, consider the strategy 
\begin{align*}
    B^{W^{(3)}}_{00} = \ket{+i}\bra{+i} & & B^{W^{(3)}}_{10} = \ket{-i}\bra{-i} \\
    C^{W^{(3)}}_{00} = \ket{-i}\bra{-i} & & C^{W^{(3)}}_{01} = \ket{+i}\bra{+i} 
\end{align*}
where $\ket{\pm i} = \frac{\ket{0} \pm i \ket{1}}{\sqrt{2}}$. In this case, one can check that Bob and Charlie both make the correct guess with probability $\frac{1}{2}$.
\end{example}
\Cref{ex:N1ExampleContinued} is surprisingly indicative of the general case. When $m$ is large, the state $\ket{W}$ usually has a great deal of entanglement and there is more opportunity for $Z$-error information to appear on Bob's side, and $X$-information to be on Charlie's side as is the case when $W = W^{(2)}$ in \Cref{ex:N1ExampleContinued}. Even with this added complexity, Bob and Charlie can perform local operations on their own qubits to reduce the state to a simple one that looks like the $m=1$ case. As described in \Cref{sec:upperbound}, it suffices to show for a fixed subspace $W \in G(2m, m)$, that Bob and Charlie can win with a probability of $2^{-k}$ where $k = \dim(\{v_1^m : v \in W \})$ is the dimension of $W$ restricted to the first $m$ coordinates. The displayed algorithm gives an overview of their winning strategy.

\begin{algorithm}[t]
\label{alg:optimalstrat}
\caption{\textbf{Optimal Coset Guessing Game Strategy}}
\begin{algorithmic}
\STATE \textbf{Preparation:}\\ Before starting the game, Bob and Charlie determine a set of circuits they will use for each subspace $W \in G(2m,m)$.
\begin{itemize}[leftmargin=0.5cm]
    \item For each $W \in G(2m,m)$, Bob and Charlie determine the sets $I,J,\CS(W),\CS(W^{\perp})$ as in \Cref{sec:cosetEncoding}.
    \item For each $W \in G(2m,m)$, Bob and Charlie design circuits $\mathcal{B}_W,\mathcal{C}_W$ to be executed on their qubits of \[\ket{W_{x, z}} = \CNOT_J H_I\ket{x + z}\] that
    \begin{enumerate}[leftmargin=0.5cm]
        \item Undo the $\CNOT$ gates that are local to either Bob or Charlie.
        \item Undo $H$ gates on control qubits that have are $\CNOT$-free after the previous step.
        \item Apply $\CNOT$ gates to isolate their entanglement to shared Bell state pairs.
        % \item Measure in the $\ket{\pm i} = \frac{1}{\sqrt{2}}(\ket{0} \pm \ket{1})$ basis for qubits in shared Bell state pairs and measure in the computational basis otherwise.
    \end{enumerate}
\end{itemize}
\hrulefill

\STATE \textbf{Game Dynamics:} Bob and Charlie can no longer communicate once the game begins.
\begin{itemize}[leftmargin=0.5cm]
    \item Once Alice sends $\ket{W_{x,z}}$ to Bob and Charlie and announces $W \in G(2m,m)$, Bob and Charlie apply circuits $\mathcal{B}_W,\mathcal{C}_W$ respectively to their qubits.
    \begin{itemize}[leftmargin=0.5cm]
        \item They measure in the $\ket{\pm i} = \frac{1}{\sqrt{2}}(\ket{0} \pm \ket{1})$ basis for qubits in shared Bell state pairs and measure in the computational basis otherwise.
        \item They store their  outputs $b,c$ in classical registers.
    \end{itemize}
    \item Bob and Charlie act on $b,c$ as determined by the circuits $\mathcal{B}_W,\mathcal{C}_W$ to recover guesses $\hat{x},\hat{z}$ for $x,z$ respectively. 
    % \item By construction, Bob guesses the bits of $x$ on his side with certainty, and Charlie guesses the bits of $z$ on his side with certainty. The remaining bits of $\hat{x},\hat{z}$ come from the first $n$ bits of $z$ and the last $n$ bits of $x$ respectively, and they perfectly correlate the events that Bob and Charlie make correct guesses.
    \begin{itemize}
        \item Bob and Charlie win with probability $2^k$, where $k = \dim(\{v_1^m : v \in W \}) = n-\dim(W \cap \langle e_i \rangle_{i=m+1}^{2m})$.
    \end{itemize}
\end{itemize}
\end{algorithmic}
\end{algorithm}

Recall that the bound is saturated when each player guesses correctly with probability $\Theta(1/2^m)$, and they are simultaneously correct or incorrect. In our strategy, Bob and Charlie process their states locally, so their classical and quantum correlations allow their guesses to be maximally correlated and correct with probability $\Theta(1/2^m)$.

In the coset state $\ket{W_{x,z}} = \CNOT_{J}H_{I}\ket{x+z}$ there will be $\CNOT$ gates that Bob and Charlie can invert locally, depending on $W$, thus removing local entanglement and enabling subsequent inversion of the $H$ gates. The scenario where they can invert all gates locally is equivalent to a classical game in which Alice sends them $x + z$ instead of $\ket{W_{x, z}}$. We address this case in \Cref{subsec:NoEntanglement}. Then, in \Cref{subsec:QuantumStrat}, we show that after locally inverting all gates they can, Bob and Charlie (using only local operations) can reduce the remaining shared entanglement into a set of Bell-state pairs.

\subsection{Bob and Charlie's Entanglement is Locally Removable}
\label{subsec:NoEntanglement}
 We describe how Bob and Charlie can simplify the shared $\ket{W_{x, z}}$ state by performing local operations on their respective qubits. 
 %Since entanglement cannot be created from applying only $X$ and $Z$ gates to a state, the qubits that are entangled in $\ket{W_{x, z}} = X_{\supp{(x)}}Z_{\supp{(z)}}\ket{W}$ are the same as those in $\ket{W}$. Thus, Bob and Charlie know the qubits that are entangled at the start of the coset guessing game. 
 If there is no entanglement between Bob and Charlie after these operations, they can simplify the state to $\ket{x + z}$, reducing the game to a classical one.

% Throughout the rest of this section, we will fix the standard setup where $W \in G(2n, n)$, $A_W$ is the generator matrix of $W$ in reduced row echelon form, $I$ is the indices of the pivot columns of $A_W$, $J=\{(i,j):i \in I, j \in I^c, A_{ij}=1\}$, $\CS(W) = \langle e_i \rangle_{i \in I^c}$, and $\CS(W^\perp) = \langle e_i \rangle_{i \in I}$.

\begin{example}
    \label{ex:FirstQuantumExample}
    Suppose $m=2$ and the reduced-row echelon generator matrix for $W$ is
    \begin{equation*}
        A = \begin{pmatrix}
        1 & 1 & 0 & 0 \\
        0 & 0 & 1 & 0 \\
        \end{pmatrix}.
    \end{equation*}
    We have $I = \{1, 3\}$, $\CS(W) = \langle e_2, e_4\rangle$ and $\CS(W^\perp) = \langle e_1, e_3 \rangle$. Suppose Alice chooses \[x = \begin{pmatrix}
        0 & 0 & 0 & 1
        \end{pmatrix} \qquad z = \begin{pmatrix}
        1 & 0 & 1 & 0
        \end{pmatrix}\]
        and so we have \[\ket{W} = \ket{\Phi^+} \otimes \ket{+} \otimes \ket{0} \quad \text{ and } \quad \ket{W_{x, z}} = \ket{\Phi^-} \otimes \ket{-} \otimes \ket{1}\] 
        with $\ket{\Phi^{\pm}} = \frac{1}{\sqrt{2}}\pr{\ket{00} \pm \ket{11}}$. When Alice sends the first $2$ qubits of $\ket{W_{x, z}}$ to Bob and second $2$ qubits to Charlie, they receive states with no shared entanglement. We claim this implies they can reduce their shared state to $\ket{x + z}$. 
        
        First, we show how Charlie can recover $(x + z)_{3}^4$. Charlie knows his first qubit is not entangled with any other qubit, and he knows it must be $\ket{+}$ or $\ket{-}$ when $z_3 = 0$ or $z_3 = 1$, respectively. By applying $H$ to this qubit, he obtains $\ket{0}$ or $\ket{1}$ when $z_3 = 0$ or $1$, turning the qubit into $\ket{z_3}$. Likewise, he knows his second qubit is $\ket{x_4}$ since the fourth column of $A_W$ is all 0's.

        Bob similarly knows that his qubits are entangled and are one of the four bell state pairs depending on $z_1$ and $x_2$. Since the bell state pairs can be encoded using \Cref{prop:ketWGoodRep} and \Cref{prop:BCrepForWxz}, Bob knows his bell state pair is \[\CNOT_{1,2}H_1\ket{z_1x_2}.\] So Bob may remove the entanglement by applying the decoding circuit $H_1\CNOT_{1, 2}$ himself, yielding $\ket{z_1x_2}$.

        As shown above, undoing the operations Alice uses to create $\ket{W}$ is not always possible. For example, a $\CNOT$ gate may be applied to one of Bob's and one of Charlie's qubits. These cases give rise to shared entanglement between Bob and Charlie, which is handled in \Cref{subsec:QuantumStrat}.
        % If there is no entanglement, then Bob and Charlie will always be able to get $\ket{x + z},$ measure in the computational basis, and make guesses according to the classical game in \Cref{subsec:Classical}.
\end{example}

Given the state $\ket{W_{x, z}} = \CNOT_{J}H_{I}\ket{x + z}$, the next proposition describes the obvious operations Bob and Charlie can make to undo some of the $\CNOT$ and $H$ operations in $\CNOT_{J}H_{I}\ket{x + z}$ that are local to the first $m$ and second $m$ qubits.

\begin{proposition}
    \label{prop:SeperatingEntanglement}
    Bob and Charlie can win the coset guessing game with probability $p_m$ if they can win it given the state
    \begin{equation}
        \ket{\widehat{W}_{x,z}} = \CNOT_{J'}H_{I'}\ket{x+z}
        \label{eq:en}
         \end{equation}
    where $J'=\{(i,j): 1 \leq i \leq m, m+1 \leq j \leq 2m, (i,j) \in J\}$ and $I' = \{i : (i, j) \in J' \text{ for any } j\}$.
\end{proposition}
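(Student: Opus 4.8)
The plan is to produce, for each subspace $W$, local unitaries $U_B$ on Bob's first $m$ qubits and $U_C$ on Charlie's last $m$ qubits, depending only on $W$, such that $(U_B \otimes U_C)\ket{W_{x,z}} = \ket{\widehat{W}_{x,z}}$ for every choice of coset representatives $x \in \langle e_i\rangle_{i\in I^c}$ and $z \in \langle e_i\rangle_{i\in I}$. Granting this, a strategy $\{\widehat{B}^W_x\},\{\widehat{C}^W_z\}$ winning the game in which Alice distributes $\ket{\widehat{W}_{x,z}}$ pulls back to the strategy $\{U_B^\dagger \widehat{B}^W_x U_B\},\{U_C^\dagger \widehat{C}^W_z U_C\}$ for the original game; since $U_B, U_C$ are unitary these are genuine POVMs, and by cyclicity of the trace the conditional success probability $\Tr[(B^W_x \otimes C^W_z)\ket{W_{x,z}}\bra{W_{x,z}}]$ equals $\Tr[(\widehat{B}^W_x \otimes \widehat{C}^W_z)\ket{\widehat{W}_{x,z}}\bra{\widehat{W}_{x,z}}]$. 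Averaging over $W$, $x$, $z$ transfers the winning probability $p_m$.

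To construct $U_B$ and $U_C$, I start from the factorization $\CNOT_J H_I = \prod_{i \in I}\CNOT_{i,B_i}H_i$ of \Cref{prop:ketWGoodRep}, where $B_i = \supp(a_i)\setminus\{i\}$ and, because every control lies in $I$ while every target lies in $I^c$, the factors for distinct pivots commute and each $H_i$ commutes with all factors but its own. Split the pivots into $I_B = I \cap [1,m]$ and $I_C = I \cap [m+1,2m]$. Since $A_W$ is in reduced row echelon form, $B_i \subseteq \{i+1,\dots,2m\}$, so every factor with $i \in I_C$ acts only on Charlie's register; gather these into a Charlie-local unitary $L_C$. For $i \in I_B$, partition $B_i$ into $B_i' = B_i \cap [1,m]$ and $B_i'' = B_i \cap [m+1,2m]$ and write $\CNOT_{i,B_i}H_i = \CNOT_{i,B_i''}\,\bigl(\CNOT_{i,B_i'}H_i\bigr)$: the first piece is a ``cross'' controlled-NOT and the second is Bob-local. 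By the definitions in the proposition, $J' = \{(i,j): i\in I_B,\ j \in B_i''\}$ and $I' = \{i \in I_B : B_i'' \neq \emptyset\}$, so $\CNOT_{J'} = \prod_{i\in I_B}\CNOT_{i,B_i''}$.

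Next I reorganize the product. Using three commutation facts — two controlled-NOTs sharing only a common target commute, a controlled-NOT commutes with any gate on disjoint qubits, and a pivot index in reduced row echelon form is never a target index — one checks that the cross pieces collect on the left, yielding $\CNOT_J H_I = L_C\,\CNOT_{J'}\,L_B$ with $L_B = \prod_{i\in I_B}\CNOT_{i,B_i'}H_i$ Bob-local. Set $U_C = L_C^{-1}$. For $U_B$, observe that the factors of $L_B$ with $i \in I_B \setminus I'$ act on qubits disjoint from $I'$, and that the $\CNOT_{i,B_i'}$ parts of the factors with $i \in I'$ are controlled by $I'$; in both cases they commute through $\CNOT_{J'}$. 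So Bob can first undo the $I_B\setminus I'$ factors and then undo the residual $\CNOT_{i,B_i'}$ gates, which leaves exactly $\prod_{i\in I'}H_i = H_{I'}$ trapped between $\CNOT_{J'}$ and $\ket{x+z}$. Calling the composite of these moves $U_B$, and using the representation $\ket{W_{x,z}} = \CNOT_J H_I\ket{x+z}$ of \Cref{prop:BCrepForWxz}, we obtain $(U_B\otimes U_C)\ket{W_{x,z}} = L_C^{-1}L_C\,\CNOT_{J'}H_{I'}\ket{x+z} = \ket{\widehat{W}_{x,z}}$, completing the reduction.

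The only real work is the bookkeeping behind $\CNOT_J H_I = L_C\,\CNOT_{J'}\,L_B$ and the subsequent isolation of $H_{I'}$: one must track which gate slides past which, and the single fact that makes this painless is that in reduced row echelon form a control (pivot) column is never anyone else's support index, so the only Bob-side qubits that the cross piece $\CNOT_{J'}$ touches are its own controls $I'$. The conversion is in fact reversible, so the two games share the same optimal winning probability, but only the stated direction is needed downstream.
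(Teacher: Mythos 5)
Your proposal is correct and follows essentially the same route as the paper: Bob and Charlie locally apply the CNOT gates of $J\setminus J'$ (each of which is confined to one party's register) to cancel them against $\CNOT_J$ using the mutual commutativity of CNOTs whose controls are pivots and whose targets are non-pivots, and then locally apply $H_i$ for $i\in I\setminus I'$, which commute through $\CNOT_{J'}$ and cancel the corresponding Hadamards. Your write-up packages this via the per-pivot factorization $\prod_{i\in I}\CNOT_{i,B_i}H_i$ and spells out the POVM pull-back $(U_B^\dagger \widehat{B}^W_x U_B,\ U_C^\dagger \widehat{C}^W_z U_C)$ that the paper leaves implicit, but the underlying argument is the same.
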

\begin{proof}
We show that Bob and Charlie can get from the original game state to the state in \eqref{eq:en} by applying only local gates without communicating.
    Bob and Charlie start with each having $m$ qubits of the state
    \[\ket{W_{x, z}} = \CNOT_{J}H_{I}\ket{x+z}.\] For any pair $(i,j) \in J$ with $1\leq i,j \leq m$, Bob can apply $CNOT_{ij}$. Similarly, for any pair $(i,j) \in J$ with $m+1\leq i, j \leq 2m$, let Charlie apply $\CNOT_{ij}$ to his qubits. Since all $\CNOT$ gates indexed by $J$ commute, the resulting state after applying $\CNOT_{J\setminus J'}$ is
    \begin{align*}
        \CNOT_{J\setminus J'}&\CNOT_{J}H_{I}\ket{x+z} \\
        &= \CNOT_{J\setminus J'}\CNOT_{J \setminus J' } \CNOT_{J'}H_{I}\ket{x+z}\\
        &= \CNOT_{J'}H_{I}\ket{x+z}.
    \end{align*} 
    Finally, for any $i\in I$ such that there are no longer any $\CNOT$ gates with control qubit $i$, the gate $H_i$ can be applied since it will commute with every $\CNOT$ indexed by $J'$. The resulting state after applying $H_{I\setminus I'}$ then has the form
    \begin{align*}
        H_{I \setminus I'}&\CNOT_{J'}H_{I}\ket{x+z} \\
        &= \CNOT_{J'}H_{I \setminus I'}H_{I}\ket{x+z}\\
        &= \CNOT_{J'}H_{I \setminus I'}H_{I \setminus I'} H_{I'}\ket{x+z}\\
        &= \CNOT_{J'}H_{I'}\ket{x+z} = \ket{\widehat{W}_{x,z}}
        \qedhere
    \end{align*}
\end{proof} 
If there is no entanglement between Bob and Charlie, then there is no pair $(i, j) \in J$ with $i \le m < j$. This means the state $\ket{\widehat{W}_{x, z}}$ has become $\ket{x+z}$. Bob and Charlie can reach the desired winning probability by playing the following classical game.

%\subsection{A Classic Inference Game}
%\label{subsec:Classical}

Suppose Alice generates $W \in G(2m, m)$, $x \in \CS(W)$, and $z \in \CS(W^\perp)$ with the representations in \Cref{sec:cosetEncoding} and sends the first $m$ bits of $x + z$ to Bob and the second $m$ bits of $x + z$ to Charlie. We will show that Bob and Charlie can jointly and respectively guess $x$ and $z$ with probability $2^{-k}$ where $k$ is equal to the dimension of $W$ restricted to the first $m$ coordinates. 

We still require that Bob and Charlie do not communicate after receiving their respective bits of $x + z$, so it may seem as if the information about $z \in \CS(W^\perp)$ on Bob's side and the information about $x \in \CS(W)$ on Charlie's side cannot be exchanged. However, our choice of representations for $\CS(W)$ and $\CS(W^\perp)$ as described in \Cref{prop:CSGoodReps} allows Bob and Charlie to perfectly correlate information about the other side's bits without communicating. 
%TODO: Multiplayer \nnote{Possibly worth finding a reference to a survey on similar games like the hat problem here, as this is an example.}

\begin{example}
    \label{ex:BobandCharlieBijection}
    Suppose $m=3$ and the reduced-row echelon generator matrix for $W$ is
\begin{equation*}
    A = \begin{pmatrix}
    1 & 0 & 1 & 0 & 0 & 1\\
    0 & 1 & 1 & 1 & 0 & 1\\
    0 & 0 & 0 & 0 & 1 & 0\\
    \end{pmatrix}.
\end{equation*}
Recall that $\CS(W) = \langle e_3, e_4, e_6\rangle$ and $\CS(W^\perp) = \langle e_1, e_2, e_5 \rangle$. Suppose Alice has 011110, which she splits to send to Bob and Charlie:
\[
\underbrace{011110}_{\text{Alice}} = \underbrace{011}_{\text{Bob}}\underbrace{110}_{\text{Charlie}}
\]
Since Bob and Charlie know and agree on this description of $\CS(W)$ and $\CS(W^\perp)$, they know that \[x + z = z_1z_2x_3x_4z_5x_6.\] 
%\enote{It could be helpful to say that Bob must guess $x$ bits and Charlie must get $z$ bits, but they each get a mixture.}
%Bob and Charlie can win with certainty if $I = [1, n]$ as they will see all the bits they need. This is equivalent to the dimension of intersection of $W$ with the second $n$ coordinates being $k=0$.

Bob needs to identify $x$ bits indexed by $I^c$, while Charlie needs to identify $z$ bits indexed by $I$. Since Bob is given the first three bits, he can immediately identify one out of three bits of $x$, and likewise, Charlie can identify one out of three bits of $z$. While Bob and Charlie are left to guess the other two bits, they can improve their joint probability of winning to be higher than choosing uniformly at random. In this case, Bob can guess $x = 001001$ by setting his guess for $x_4$ to the first bit he sees on his side and making his guess for $x_6$ to the second bit on his side. Charlie can similarly guess $z = 100010$ by guessing $z_1$ and $z_2$ to be the bits he sees, but does not need.

In this case, Bob and Charlie are both incorrect in their guesses because they have guessed that $z_1 = x_4$ and $z_2 = x_6$. So, if one person incorrectly guesses a bit, they both do. Neither Bob nor Charlie knows their missing bits, but they can make their guesses coincide. Therefore, their joint probability of winning is equal to the marginal probability of either Bob or Charlie winning. This is the probability that $z_1 = x_4$ and $z_2 = x_6$, which is $\frac{1}{4}$, which agrees with $2^{-k}$ and $k = 2 = |I \cap [m]|$.
\end{example}

Formally, Bob and Charlie agree on a bijection $h_W: I \cap [m] \to I^c \cap [m+1, 2m]$ for every subspace $W$. For every bit $b_i$ Bob sees such that $i \in I^c$, he guesses $x_i = b_i$. For the other possibly nonzero bits of $x$ indexed by $I^c \cap [m+ 1, 2m]$, he guesses $x_{h_W(i)} = b_i$. Charlie will do similarly using $h_W^{-1}$ so that they jointly guess $z_i = x_{h_W(i)}$ for every $i \in I \cap [m]$.

We transfer this classical game into the language of quantum by encoding the classical bits into a state to refer to it later in this section. However, no quantum properties are needed.

\begin{proposition}
    \label{prop:ClassicalStrat} 
    Given $W \in G(2m, m)$, let $A_W$ be the generator matrix of $W$ in reduced row echelon form. Let $I$ denote the indices of the pivot columns of $A_W$ and let $J=\{(i,j):i \in I, j \in I^c, A_{ij}=1\}$. Finally, let $\CS(W) = \langle e_i \rangle_{i \in I^c}$ and $\CS(W^\perp) = \langle e_i \rangle_{i \in I}$.
    
    For every $W \in G(2m, m)$, there exists a bijection $h_W: I \cap [m] \to I^c \cap [m+1, 2m]$. For any such bijection, the POVMs $B_x^W = \bigotimes_{i=1}^mB_i^W$ and $C_z^W = \bigotimes_{j=m + 1}^{2n} C_j^W$ with
    \begin{align*}B_i^W =\begin{cases}
        \ket{x_i}\bra{x_i} & i \in I^c \\
        \ket{x_{h_{W}(i)}}\bra{x_{h_{W}(i)}} & i \in I
    \end{cases} \\
    C_j^W =\begin{cases}
        \ket{z_j}\bra{z_j} & j \in I \\
        \ket{z_{h_{W}^{-1}(j)}}\bra{z_{h_{W}^{-1}(j)}} & j \in I^c
    \end{cases}
    \end{align*}
    satisfy
    \begin{align*}\Tr\bigl[(B_x^W \otimes C_z^W) &\ket{x + z}\bra{x + z}\bigr] \\
    &= \begin{cases}
        1 & z_i = x_{h_W(i)} \text{ for all $i \in I \cap [m]$}\\
        0 & \text{otherwise}
    \end{cases}.\end{align*}
\end{proposition}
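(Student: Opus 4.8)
The plan is to prove the statement in two steps: first confirm that the asserted bijection $h_W$ can exist by a dimension count, then evaluate the trace by recognizing the measurement operator as a projector onto a single computational basis vector and comparing it coordinate-by-coordinate with $\ket{x+z}$.

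First I would dispatch the existence of $h_W$. Because $A_W$ is in reduced row echelon form and generates the $m$-dimensional space $W$, it has exactly $m$ pivots, so $|I| = m$. Splitting $[2m] = [m] \cup [m+1,2m]$ gives $|I \cap [m]| + |I \cap [m+1,2m]| = m$, and splitting the $m$-element set $[m+1,2m]$ by membership in $I$ gives $|I \cap [m+1,2m]| + |I^c \cap [m+1,2m]| = m$; subtracting yields $|I \cap [m]| = |I^c \cap [m+1,2m]|$, so a bijection between these two index sets exists, and we fix any one of them as $h_W$.

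Next, observe that $B_x^W \otimes C_z^W$ is by construction a tensor product of $2m$ rank-one computational-basis projectors, hence equals $\ket{w}\bra{w}$ for the string $w \in \F_2^{2m}$ with $w_i = x_i$ on $I^c \cap [m]$, $w_i = x_{h_W(i)}$ on $I \cap [m]$, $w_j = z_j$ on $I \cap [m+1,2m]$, and $w_j = z_{h_W^{-1}(j)}$ on $I^c \cap [m+1,2m]$. Therefore the trace equals $|\langle w \mid x+z\rangle|^2$, which is $1$ when $w = x+z$ and $0$ otherwise, and the only remaining work is to pin down when $w = x+z$. Here I would invoke \Cref{prop:CSGoodReps}: since $x \in \langle e_i\rangle_{i\in I^c}$ and $z \in \langle e_i\rangle_{i\in I}$, the vector $x$ vanishes on $I$ and $z$ vanishes on $I^c$, so $(x+z)_\ell = x_\ell$ for $\ell \in I^c$ and $(x+z)_\ell = z_\ell$ for $\ell \in I$. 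Matching $w$ against $x+z$ coordinatewise: the coordinates in $I^c \cap [m]$ and in $I \cap [m+1,2m]$ agree automatically, the coordinate $i \in I \cap [m]$ requires $x_{h_W(i)} = z_i$, and the coordinate $j = h_W(i) \in I^c \cap [m+1,2m]$ requires $z_i = x_{h_W(i)}$ — the same family of conditions. Hence $w = x+z$ iff $z_i = x_{h_W(i)}$ for all $i \in I \cap [m]$, which is exactly the claimed dichotomy.

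I do not anticipate a genuine obstacle; the only points requiring care are the counting argument that makes $h_W$ well defined and the bookkeeping of which coordinates of $x$ and $z$ are forced to be zero by the choice of coset representatives — this is precisely what allows $x+z$ to interleave the bits that Bob and Charlie separately hold. Once the measurement operator is identified with $\ket{w}\bra{w}$, the rest is immediate, and (as the paper notes) no properties of quantum states beyond orthonormality of the computational basis are used.
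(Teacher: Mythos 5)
Your proof is correct and follows essentially the same route as the paper's: both reduce the trace to an overlap of computational basis states (you package the POVM element as a single rank-one projector $\ket{w}\bra{w}$ where the paper multiplies the per-qubit inner products, which is the same computation) and both hinge on the disjoint supports of $x$ and $z$ forced by \Cref{prop:CSGoodReps}. Your pivot-counting argument for the existence of $h_W$ is a slightly more explicit version of the paper's one-line remark that domain and codomain have equal size.
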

The first case in the definition of $B_i$ and $C_i$ outlines what Bob and Charlie should guess, given the bits they receive are on the correct side. The second case correlates their uncertain guesses. For Bob, this can be interpreted as measuring the $i$'th qubit to guess $x_{h_{W}(i)}$. Our claim is that Bob and Charlie win iff $x_i = z_{h_W^{-1}(i)}$.

\begin{proof}
    Our bijection always exists as the size of the domain equals the size of the codomain, both $k$. Splitting up each qubit we may write 
    \begin{align*}
        \mathcal B_i^W &= \bra{x_i + z_{i}}B_i^W\ket{x_i + z_{i}}\\
        \mathcal C_{m + i}^W &= \bra{x_{m + i} + z_{m + i}}C_{m + i}^W\ket{x_{m + i} + z_{m + i}}
    \end{align*}
    and the initial expression becomes
    \begin{align*}
        \Tr\bigl[(B_x^W &\otimes C_z^W) \ket{x + z}\bra{x + z}\bigr] =\\
        &= \Tr\bigl[\bra{x + z}(B_x^W \otimes C_z^W) \ket{x + z}\bigr] \\
        %&\prod_{i = 1}^m \Tr\bigl[(B_i^W \otimes C_{m + i}^W) (\ket{x_i + z_{i}} \otimes \ket{x_{m + i} + z_{m + i}})(\bra{x_{i} + z_{i}} \otimes \bra{x_{m+i} + z_{m+i}})\bigr]\\
        % &= \prod_{i = 1}^m \bra{x_i + z_{i}}B_i^W\ket{x_i + z_{i}} \bra{x_{m + i} + z_{m + i}}C_{m + i}^W\ket{x_{m + i} + z_{m + i}}.
        &= \prod_{i = 1}^m \mathcal B_i^W \mathcal C_{m + i}^W
    \end{align*}
    We compute
    \[ \mathcal B_i^W = \bra{x_i + z_{i}}B_i^W\ket{x_i + z_{i}} = \begin{cases}
        \bra{x_i}\ket{x_i}^2 = 1 & i \in I^c\\
        \bra{z_{i}}\ket{x_{h_W(i)}}^2 & i \in I
    \end{cases}\]
    and likewise
     \begin{align*}
     \mathcal C_{m + i}^W &= \bra{x_{m + i} + z_{m + i}}C_{m + i}^W\ket{x_{m + i} + z_{m + i}} \\
     &= \begin{cases}
        \bra{z_{m + i}}\ket{z_{m + i}}^2 = 1 & m + i \in I\\
        \bra{x_{m + i}}\ket{z_{h^{-1}_W(m + i)}}^2 & m + i \in I^c
    \end{cases}.\end{align*}
    If we have $z_i \ne x_{h_W(i)}$, for some $i \in I \cap [m]$, then the product is zero as expected. Otherwise, if $z_i = x_{h_W(i)}$ for all $i \in I \cap [m]$, then we also have $z_{h_W^{-1}(i)} = x_i$ for all $i \in I^c \cap [m + 1, 2m]$ since $h_W$ is a bijection. In this case all terms in the product are $1$ as desired.
\end{proof}

The probability of winning this classical game is then the probability that $z_i = x_{h_W(i)}$ at the indices $I \cap [m]$. Since all the bits are chosen uniformly at random, this occurs with probability $2^{-|I \cap [m]|} = 2^{- k}$ as desired.

\subsection{Bob and Charlie's Entanglement is not Locally Removable}
\label{subsec:QuantumStrat}
This section deals with the case where Bob and Charlie after locally inverting all $\CNOT$ gates they can, still share entanglement. We start with $\ket{\widehat{W}_{x,z}}$ from \Cref{prop:SeperatingEntanglement} and perform another simplification to the state using local operations.

The state \[\ket{\widehat{W}_{x,z}} = \CNOT_{J'}H_{I'}\ket{x+z}\] consists of entangled qubits between Bob and Charlie. Moreover, since the control bits are on Bob's side and the target bits are on Charlie's side, the entanglement in $\ket{\widehat{W}_{x,z}}$ is entirely shared between Bob and Charlie, meaning each of Bob's qubits are either unentangled, or are entangled with qubits on Charlie's side. Unfortunately, we cannot remove this entanglement without applying gates non-locally across Bob's and Charlie's sides. Nonetheless, we find a method to simplify the shared entanglement between Bob and Charlie so that they can achieve an optimal winning strategy.

\begin{example}
    \label{ex:BobandCharlieSharedEntanglement}
    As in \Cref{ex:BobandCharlieBijection}, suppose $m=3$ and the reduced-row echelon generator matrix for $W$ is
\begin{equation*}
    A = \begin{pmatrix}
    1 & 0 & 1 & 0 & 0 & 1\\
    0 & 1 & 1 & 1 & 0 & 1\\
    0 & 0 & 0 & 0 & 1 & 0\\
    \end{pmatrix}.
\end{equation*}
Recall that $\CS(W) = \langle e_3, e_4, e_6\rangle$ and $\CS(W^\perp) = \langle e_1, e_2, e_5 \rangle$. For any choices of $x \in \CS(W)$ and $z \in \CS(W^\perp)$, we have
\begin{align*}
    \ket{W_{x, z}} &= \CNOT_{1, \{3,6\}}\CNOT_{2, \{3,4,6\}}H_{1, 2, 5}\ket{x + z}\\
    \ket{\widehat{W}_{x, z}} &= \CNOT_{1, 6}\CNOT_{2, \{4,6\}}H_{1, 2}\ket{x + z}.
\end{align*}

    where we use the shorthand notation $\CNOT_{i,J} = \prod_{j \in J} \CNOT_{i,j}$ All but the third and fifth qubit are entangled since $\CNOT_{i, j}H_i\ket{x + z}$ entangles the $i$ and $j$'th qubits. We will present operations Bob and Charlie can apply so that each entangled qubit forms an EPR pair with a single qubit on the other side. \rev{In this example, Bob will apply $\CNOT_{2,1}$, and the resulting state will be
    \[\CNOT_{1, 6}\CNOT_{2,4}H_{1,2}\CNOT_{1,2}\ket{x + z}.\]
    We can then interpret $\CNOT_{1, 2}$ as an invertible linear map $\F_2^{2m} \to \F_2^{2m}$ that acts separately on the first $m$ bits of $z$ and the second $m$ bits of $x$, as a block diagonal operator. If $y$ is the image of this map, Bob and Charlie's simplified state is
    \[(\CNOT_{1, 6}H_{1})(\CNOT_{2,4}H_2)\ket{y}\]
    where
    \(y = \begin{pmatrix}
        z_1 & z_1+z_2 & x_3 & x_4 & z_5 & x_6
    \end{pmatrix}\).
    }
\end{example}
% In general, in the state $\ket{\widehat{W}_{x,z}}$, the pairs $(i, j) \in J'$ are not unique in either coordinate. Thus, it is possible to have entanglement between multiple qubits on Bob's side and multiple qubits on Charlie's side.
Our first goal is to ensure that each qubit on Bob's side is entangled only with qubits on Charlie's side; then, we repeat the same for Charlie. That is, Bob and Charlie will perform local operations so that the pairs in $J'$ satisfy $(i_1, j), (i_2, j) \in J'$ imply $i_1 = i_2$ and $(i, j_1), (i, j_2) \in J'$ imply $j_1 = j_2$.

\begin{proposition}
    \label{prop:SinglingEntanglement}
    Bob and Charlie can win the coset guessing game with probability $p_m$ if they can win it given the state
    \[\CNOT_{i_1, j_1}H_{i_1}\cdots\CNOT_{i_\ell, j_\ell}H_{i_\ell}\ket{y}\]
    where $i_1, \dots, i_\ell \subseteq [m]$ and $j_1, \dots, j_{\ell} \subseteq [m+1..2m]$ are all distinct and $y \in \F_2^{2m}$ satisfies \[y = (f_1(x_1^m + z_1^m), f_2(x_{m + 1}^{2m} + z_{m + 1}^{2m}))\] under linear maps $f_1, f_2: \F_2^{2m} \to \F_2^{m}$ known to Bob and Charlie satisfying $f_1(x_1^m) = x_1^m$ and $f_2(z_{m + 1}^{2m}) = z_{m + 1}^{2m}$.
\end{proposition}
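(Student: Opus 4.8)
The plan is to bring $\ket{\widehat{W}_{x,z}} = \CNOT_{J'}H_{I'}\ket{x+z}$ from \Cref{prop:SeperatingEntanglement} into a linear-algebraic normal form and then recognize the remaining simplification as Gaussian elimination of a binary matrix. Because $A_W$ is in reduced row echelon form, every pair of $J'$ has its control in $I'\subseteq I\cap[m]$ (Bob's side) and its target in $T:=\{j:(i,j)\in J'\text{ for some }i\}\subseteq I^c\cap[m+1,2m]$ (Charlie's side); in particular $(x+z)_i=z_i$ for $i\in I'$ and $(x+z)_j=x_j$ for $j\in T$. Let $M\in\F_2^{I'\times T}$ be the biadjacency matrix of $J'$. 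Pushing $H_{I'}$ through the $\CNOT_{J'}$ gates yields the identity
\[
\ket{\widehat{W}_{x,z}}=\frac{1}{\sqrt{2^{|I'|}}}\sum_{a\in\F_2^{I'}}(-1)^{z_{I'}\cdot a}\,\ket{a}_{I'}\otimes\ket{x_T\oplus M^{\top}a}_T\otimes\ket{(x+z)_R}_R,
\]
where $R$ collects the inert coordinates, untouched by the circuit.

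Next I would observe that Bob's local $\CNOT$ circuits on the register $I'$ realize every invertible linear map $L_B$ on $\F_2^{I'}$ and Charlie's on $T$ realize every $L_C$ on $\F_2^{T}$ (such maps are products of $\CNOT$ gates); substituting $a\mapsto L_B^{-1}a$ shows that after they apply $L_B$ and $L_C$ the state keeps the same shape with $M$ replaced by $L_B^{-\top}ML_C^{\top}$, $z_{I'}$ by $L_B^{-\top}z_{I'}$, and $x_T$ by $L_Cx_T$. These are exactly arbitrary invertible row and column operations on $M$, so I pick $L_B,L_C$ putting $M':=L_B^{-\top}ML_C^{\top}$ in rank normal form: a partial permutation matrix whose $\ell=\operatorname{rank}(M)$ ones lie at positions $(i_1,j_1),\dots,(i_\ell,j_\ell)$ with the $i_k\in[m]$ distinct and the $j_k\in[m+1,2m]$ distinct. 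For those pairs $((M')^{\top}a)_{j_k}=a_{i_k}$ while every other target coordinate of $(M')^{\top}a$ vanishes, so the state now factors as a tensor product of: $\ell$ blocks $\frac{1}{\sqrt2}\sum_c(-1)^{w_{i_k}c}\ket{c}_{i_k}\ket{x'_{j_k}\oplus c}_{j_k}=\CNOT_{i_k,j_k}H_{i_k}\ket{w_{i_k}}_{i_k}\ket{x'_{j_k}}_{j_k}$, with $w=L_B^{-\top}z_{I'}$ and $x'=L_Cx_T$; unentangled Hadamard qubits $H_i\ket{w_i}$ over the leftover controls $i\in I'$; and computational qubits $\ket{x'_j}$ and $\ket{(x+z)_R}$. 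Finally I let Bob apply $H_i$ to each leftover control, so $H_i\ket{w_i}\mapsto\ket{w_i}$, and the state becomes exactly $\CNOT_{i_1,j_1}H_{i_1}\cdots\CNOT_{i_\ell,j_\ell}H_{i_\ell}\ket{y}$, where $y$ equals $x+z$ on $R$, equals $w=L_B^{-\top}z_{I'}$ on $I'$, and equals $x'=L_Cx_T$ on $T$. Taking $f_1$ to be the identity off $I'$ and $L_B^{-\top}$ on $I'$, and $f_2$ the identity off $T$ and $L_C$ on $T$, we get $y=(f_1(x_1^m+z_1^m),f_2(x_{m+1}^{2m}+z_{m+1}^{2m}))$ with $f_1,f_2$ invertible; and since $x_1^m$ is supported on $I^c\cap[m]$ (disjoint from $I'$) and $z_{m+1}^{2m}$ on $I\cap[m+1,2m]$ (disjoint from $T$), indeed $f_1(x_1^m)=x_1^m$ and $f_2(z_{m+1}^{2m})=z_{m+1}^{2m}$.

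Every operation above is a local unitary on Bob's or Charlie's own qubits, applied after Alice's distribution and needing no communication, so a strategy winning the game on the simplified state wins the original with the same probability, and chaining with \Cref{prop:SeperatingEntanglement} closes the argument. The hard part is conceptual rather than computational: recognizing that this residual simplification is precisely row/column reduction of $M$, and keeping track that the reducing circuits stay inside $I'$ for Bob and inside $T$ for Charlie, so that the coordinates of $x$ lying in $[m]$ and of $z$ lying in $[m+1,2m]$ are never disturbed — this is exactly what forces $f_1$ to fix $x_1^m$ and $f_2$ to fix $z_{m+1}^{2m}$. The remaining pieces (that $J'$ is a purely Bob-to-Charlie bipartite edge set, which comes from reduced row echelon form, and the factorization of the state once $M'$ is a partial permutation matrix) are routine.
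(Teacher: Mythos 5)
Your proposal is correct and arrives at the same destination by the same underlying idea — using local $\CNOT$ circuits on Bob's controls and Charlie's targets to Gaussian-eliminate the bipartite coupling down to a perfect matching of Bell-type blocks — but the execution differs from the paper's. The paper works at the level of operator identities: it pushes local gates $\CNOT_{i,i'}$ and $\CNOT_{j,j'}$ through $\CNOT_J$ using the commutation relations $\CNOT_{jk}\CNOT_{ij}\CNOT_{ik}=\CNOT_{ij}\CNOT_{jk}$ and its companions, interprets the effect as row/column operations on the generator matrix $A_W$, and reads off $f_1,f_2$ from the residual local $\CNOT$ layers $\CNOT_{J_1'},\CNOT_{J_2}$ acting on $\ket{x+z}$. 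You instead expand $\ket{\widehat{W}_{x,z}}$ explicitly as a superposition indexed by $a\in\F_2^{I'}$ with the biadjacency matrix $M$ of $J'$, and observe that local basis changes $L_B,L_C$ conjugate $M$ to $L_B^{-\top}ML_C^{\top}$, so the rank normal form does the rest. Your route buys two things: the partial-permutation normal form makes the distinctness of \emph{both} the $i_k$ and the $j_k$ immediate (the paper's phrasing "at most a single $1$ in each non-pivot column" only directly addresses the targets, leaving the row side implicit in the preceding row-operation discussion), and the phase bookkeeping $z_{I'}\mapsto L_B^{-\top}z_{I'}$ falls out of a change of summation variable rather than a chain of gate conjugations. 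What the paper's route buys is that it never leaves the circuit picture, so the claim that everything Bob and Charlie do is a local unitary applied to their own registers is visible at every step without re-deriving the state. Both correctly locate why $f_1$ fixes $x_1^m$ and $f_2$ fixes $z_{m+1}^{2m}$: the reducing circuits live entirely on $I'\subseteq I\cap[m]$ and $T\subseteq I^c\cap[m+1,2m]$, disjoint from the supports of $x_1^m$ and $z_{m+1}^{2m}$ respectively.
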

\begin{proof}
     We will show how Bob and Charlie can apply gates to their qubits so that entanglement is limited to pairs of qubits $(i,j)$, where $1\leq i \leq m, m+1\leq j \leq 2m$. Consider the following identities:
\begin{align*}
    \CNOT_{jk} \CNOT_{ij} \CNOT_{ik} &= \CNOT_{ij} \CNOT_{jk}\\
    \CNOT_{ij} \CNOT_{ik} \CNOT_{jk} &= \CNOT_{jk} \CNOT_{ij}\\
    (H_i \otimes H_j) \CNOT_{ij} &= \CNOT_{ji} (H_i \otimes H_j).
\end{align*}
We want to understand how the gates $\CNOT_{i,i'}$ for $i,i' \in I$ and $\CNOT_{i,i'}$ for $i,i' \in I^c$ pass through $\CNOT_{J}$. Using the identities above, for every distinct $i,i' \in I$, we know $\CNOT_{i,i'}$ commutes with every $\CNOT$ gate in $\CNOT_J$ except for those of the form $\CNOT_{i',j}$. Let $J_{i'} = \{(i',j) \in J\}$ and $J_{i'}(i) = \{(i,j):(i',j) \in J_{i'}\}$, then
\begin{align*}
    \CNOT_{i,i'}&\CNOT_J \\
    &= \CNOT_{i,i'}\CNOT_{J_{i'}}\CNOT_{J \setminus J_{i'}}\\
    &= \CNOT_{J_{i'}}\CNOT_{J_{i'}(i)}\CNOT_{i,i'}\CNOT_{J \setminus J_{i'}}\\
    &= \CNOT_{J_{i'}}\CNOT_{J_{i'}(i)}\CNOT_{J \setminus J_{i'}}\CNOT_{i,i'}\\
    &= \CNOT_{J(A')}\CNOT_{i,i'}
\end{align*}
%\nnote{why the new notation with subscript? Cna we leave it as $A$?}
%\nnote{J' is defined prior, I would just keep it as $J(A')$}
For a matrix $A_W$ in reduced row echelon form, we denote the set of pivot columns as $I(A)$ and $J(A) = \{(i,j):A_{ij}=1,i<j\}$. Likewise, given sets $I = \{i_1,i_2,\ldots, i_l\}\subseteq [2m]$ and $J$ consisting of pairs of indices $(i,j)$ with $i \in I$ and $i<j$, we denote by $A(I,J)$ the $m$-by-$2m$ matrix with $A_{k,i_k}=1$, $A_{ij}=1$ for $(i,j) \in J$ and zero elsewhere. In the above equation, $J(A')$ is the set obtained from the matrix $A'$, \rev{where given $i=i_k$ and $i'=i_{k'},$ $A'$ is obtained by adding the $k'$-th row of $A_W$ to the $k$-th row of $A_W$ modulo $2$ in the non-pivot columns of $A_W$.} Similarly, suppose $i,i' \in I^c$. Then $\CNOT_{i,i'}$ commutes with every $\CNOT$ gate in $\CNOT_J$ except for those of the form $\CNOT_{j,i}$. Let $J_{i} = \{(j,i) \in J\}$ and $J_{i}(i') = \{(j,i'):(j,i) \in J_{i'}\}$, then
\begin{align*}
    \CNOT_{i,i'}&\CNOT_J
    \CNOT_{i,i'}\CNOT_{J_{i}}\CNOT_{J \setminus J_{i}}\\
    &= \CNOT_{J_{i}}\CNOT_{J_{i}(i')}\CNOT_{i,i'}\CNOT_{J \setminus J_{i}}\\
    &= \CNOT_{J_{i}}\CNOT_{J_{i}(i')}\CNOT_{J \setminus J_{i}}\CNOT_{i,i'}\\
    &= \CNOT_{J(A')}\CNOT_{i,i'}
\end{align*}
where $J(A')$ is the set obtained from the  $A'$, and $A'$ is the matrix obtained by adding the $i$-th column of $A_W$ to the $i'$-th column of $A_W$ modulo $2$. This allows us to perform Gaussian elimination on the non-pivot columns of $A_W$ up to swapping columns. In particular, we can ensure at most a single $1$ exists in each non-pivot column, each in a distinct row. 

By \Cref{prop:SeperatingEntanglement}, we can assume $I \subseteq [m]$ and $J$ consists of pairs $(i,j)$ with $1\leq i \leq m$, $m+1 \leq j \leq 2m$, i.e. in the matrix $A_W$ denoting control-target qubit pairs, all pivots lie in the first $m$ columns, and all non-zero non-pivot columns lie in the last $m$ columns.

Bob and Charlie thus proceed as follows: they agree on a Gaussian elimination procedure to reduce $A_W$ to a matrix $A'$ where each non-zero non-pivot column in the last $n$ indices has at most a single $1$ entry. Bob applies the appropriate gates $\CNOT_{J_1}$ to his qubits, and Charlie applies the appropriate gates $\CNOT_{J_2}$ to his qubits:
\begin{align*}
    \CNOT_{J_2}&\CNOT_{J_1}\CNOT_JH_I \ket{x+z} \\
    &= \CNOT_{J(A')}\CNOT_{J_2}\CNOT_{J_1}H_I \ket{x+z}\\
    &= \CNOT_{J(A')}H_I\CNOT_{J_2}\CNOT_{J'_1} \ket{x+z}
\end{align*}
where $J'_1 = \{(j,I):(i,j) \in J_1\}$, and $J(A')$ has the form $\{(i_1,j_1),(i_2,j_2),\ldots, (i_\ell,j_\ell)\}$ stated in the hypothesis. For any index $i \in I$ but $i \not \in \{i_1,\ldots, i_\ell\}$, Bob can apply the Hadamard gate $H_i$ so that the processed state is now $\CNOT_{J(A')}H_{I'}\CNOT_{J_2}\CNOT_{J'_1} \ket{x+z}$, where $I'=\{i_1,\ldots,i_\ell\}$.

Since $J'_1$ and $J_2$ are local to Bob and Charlie respectively and $\supp(x) \cap \supp(z) = \emptyset$, we may write
\begin{align*}&\CNOT_{J_2}\CNOT_{J'_1}\ket{x + z} \\
&\quad  = \CNOT_{J'_1} \ket{x_1^m + z_1^m} \otimes \CNOT_{J_2}\ket{x_{m + 1}^{2m} + z_{m + 1}^{2m}} \\
&\quad = \ket{y}\end{align*}
where $y \in \F_2^{2m}$. Since $\CNOT_{1, 2}(\ket{b_1,b_2}) = \ket{b_1,b_1 + b_2}$, we can interpret both $\CNOT_{J'_1}$ and $\CNOT_{J_2}$ as invertible linear maps on $x + z \in \F_2^{2m}$. Moreover, since $J'_1$ only has indices on Bob's side and $J_2$ only has indices on Charlie's side, they can be interpreted as diagonal block linear maps on the first and second half of coordinates of $\F_2^{2m}$. Thus, we have $f_1, f_2: \F_2^{2m} \to \F_2^{m}$ that effect bits indexed by $J'_1, J_2$. By our constructions, we have $J'_1 \subseteq (I\cap [m])^2$ and similarly $J_2 \subseteq (I^c \cap [m+1..2m])^2$. This means the $x$ bits on Bob's side do not change and the $z$ bits on Charlie's side do not change:\[f_1(x_1^m) = x_1^m \text{ and } f_2(z_{m + 1}^{2m}) = z_{m + 1}^{2m}.\] 
Since $J'_1$ and $J_2$ can be calculated from $A_W$, the bijections $f_1, f_2$ sending $x + z \mapsto y$ are known to Bob and Charlie.
\end{proof}

In general, the resulting state 
\[\CNOT_{i_1, j_1}H_{i_1}\cdot \ldots\cdot\CNOT_{i_\ell, j_\ell} H_{i_\ell} \ket{y}\]
is a collection of computational qubits and EPR pairs. There are $\ell$ bell state pairs, and the other computational qubits occur at all indices outside of $\{i_1, \dots, i_\ell,  j_1, \dots, j_\ell\}$. The specific string of bits and bell state pairs depends on $y$. 

Now Bob and Charlie are ready to measure their qubits. For all qubits with indices outside $\{i_1, \dots, i_\ell,  j_1, \dots, j_\ell\}$, the measurements Bob and Charlie should make are exactly the same as in \Cref{subsec:NoEntanglement}, except now Bob and Charlie need $y$ instead of $x + z$. After obtaining $y$, they can use the bijection $y \mapsto x + z$.

For the bell pairs in the state indexed by $(i, j) \in \{(i_1, j_1), \dots, (i_t, j_t)\}$, Bob and Charlie should measure each pair in the $\ket{\pm i}$ basis as in \Cref{ex:N1GameIntro}. \rev{Since neither Bob nor Charlie can be certain about the expected measurement outcomes they obtain from a bell pair, Bob and Charlie will not always be able to correlate their guesses classically. However, the choice of measurement on the bell pair ensures that their measurement outcomes are correlated.} Bob and Charlie can exploit this to correlate their guesses on the outcome of measuring each bell state pair. Measuring in the $\ket{\pm i}$ basis reduces Bob and Charlie's task to guessing which one of two bell pairs was measured. That is, their guesses will depend on the probability their measurement corresponds to their guess, which can always be made to be $\frac{1}{2}$.
\begin{theorem}
    \label{thm:UpperBoundIsTight}
    Bob and Charlie can win the coset guessing game with a probability 
    \[p_m = \frac{1}{\binom{2m}{m}_2}\sum_{k=0}^m 2^{k^2} \binom{m}{k}_2^2 \left(\frac{1}{2}\right)^k = \Theta \left ( \frac{1}{2^m} \right ).\] 
    The strategy they use to achieve the bound is given below.
    \begin{enumerate}[leftmargin=0.5cm]
    \item Bob and Charlie reduce their shared state to that of \Cref{prop:SinglingEntanglement}, yielding the linear maps $f_1, f_2,$ and $f := (f_1, f_2)$.
    \item For each $x \in \CS(W)$ Bob computes $y_B = f(x)$ and for each $z \in \CS(W^\perp)$ Charlie computes $y_C = f(z)$.
    \item For each $W \in G(2m, m)$, they agree on a bijection $h_W:$ 
    \[(I \cap [m]) \setminus \{i_1, \dots, i_\ell\} \to (I^c \cap [m + 1, 2m]) \setminus \{j_s\}_{s=1}^\ell.\]
    \item They measure using the POVMs
    \[B_{x}^W = \bigotimes_{i = 1}^m B_i^W \quad \text{ and } \quad C_z^W = \bigotimes_{j = m + 1}^{2m} C_j^W\]
    where 
    \begin{align*}B_i^W = 
\begin{cases}
\hspace{-0.4cm}
\begin{split}
    &\ket{(y_B)_{i}} \bra{(y_B)_{i}} \\
    & \hspace{1.5cm} i  \not \in \{i_s\}_{s=1}^\ell \text{ and } i \in I^c\\
    &\ket{(y_B)_{h_W(i)}} \bra{(y_B)_{h_W(i)}}\\
    & \hspace{1.5cm} i  \not \in \{i_s\}_{s=1}^\ell \text{ and } i  \in I\\
    &\ket {+i} \bra{+i} \\
    & \hspace{1.5cm} i_t \in \{i_s\}_{s=1}^\ell \text{ and } (y_B)_{j_t} = 0\\
    &\ket {-i} \bra{-i} \\
    & \hspace{1.5cm}  i_t \in \{i_s\}_{s=1}^\ell \text{ and } (y_B)_{j_t} = 1
\end{split}
\end{cases}
\end{align*} and 
\begin{align*} C_j^W =
\begin{cases}
\hspace{-0.4cm}
\begin{split}
    &\ket{(y_C)_j} \bra{(y_C)_j} \\
    &  \hspace{1.5cm} j \not \in \{j_s\}_{s=1}^\ell \text{ and } j \in I\\
    &\ket{(y_C)_{h_{W}^{-1}(j)}} \bra{(y_C)_{h_{W}^{-1}(j)}} \\
    &  \hspace{1.5cm} j \not \in \{j_s\}_{s=1}^\ell \text{ and } j \in I^c\\
    &\ket {+i} \bra{+i} \\
    &  \hspace{1.5cm} j \in \{j_s\}_{s=1}^\ell \text{ and } (\widehat{y}_C)_{h_{W}^{-1}(j)} = 1\\
    &\ket {-i} \bra{-i} \\
    &  \hspace{1.5cm} j \in \{j_s\}_{s=1}^\ell \text{ and } (\widehat{y}_C)_{h_{W}^{-1}(j)} = 0.
\end{split}
\end{cases}
\end{align*}
    \end{enumerate}
\end{theorem}
\begin{proof}
    Following the proof of \Cref{thm:pnupperbound}, it suffices to show that $p_m(W) = 2^{-k}$ where $k$ is the dimension of $W$ intersected with the last $m$ coordinates. Assume Bob and Charlie share the state
    \[\CNOT_{i_1, j_1}H_{i_1}\cdots\CNOT_{i_\ell, j_\ell}H_{i_\ell}\ket{y}\]
    and know the linear maps $f_1,f_2$ such that \[f_1(y_1^m) = x_1^m + f_1(z_1^m) \; \text{ and } \; f_2(y_{m + 1}^{2m}) = f_2(x_{m + 1}^{2m}) + z_{m + 1}^{2m}.\]
    For each $1 \le i \le m < j \le 2m$ we compute
    \begin{align*}
        \label{eq:yiequation}
         y_i &=
    \begin{cases}
        (x_1^m)_i + f_1(z_1^m)_i =& f_1(z_1^m)_i = (y_C)_i \\
        &\quad i \in I \cap [m]\\
        (x_1^m)_i + f_1(z_1^m)_i =& x_i = f_1(x_1^m)_i = (y_B)_i \\
        &\quad i \in I^c \cap [m]
    \end{cases}
    \,\\
    y_j &=
    \begin{cases} 
        f_2(x_{m + 1}^{2m})_j + (z_{m + 1}^{2m})_j =& z_j = f(z_{m + 1}^{2m})_j = (y_C)_j \\
        &\quad j \in I \cap [m + 1, 2m]\\
        f_2(x_{m + 1}^{2m})_j + (z_{m + 1}^{2m})_j =& f_2(x_{m + 1}^{2m})_j = (y_B)_j \\
        &\quad j \in I^c \cap [m + 1, 2m].
    \end{cases}
    \end{align*}
    
    We split their shared state into qubits that are entangled and those that are not. Denoting $\ket{\phi_{i,j}}=\CNOT_{i,j}H_i\ket{y_{i,j}}$ and $\ket{\phi} =  \CNOT_{i_1, j_1}H_{i_1}\cdots\CNOT_{i_\ell, j_\ell}H_{i_\ell}\ket{y}$, we have
    \begin{align*}
    \Tr&\sb{\pr{B_x^W \otimes C_z^W} \ket{\phi} \bra{\phi}}
         %&= \bra{y} H_{i_\ell} \CNOT_{i_\ell, j_\ell} \cdots H_{i_1} \CNOT_{i_1, j_1} \pr{B_x^W \otimes C_z^W} \CNOT_{i_1, j_1}H_{i_1} \cdots\CNOT_{i_\ell, j_\ell} H_{i_\ell} \ket{y}\\
         = \bra{\phi}(B_x^W \otimes C_z^W)\ket{\phi}\\
        &= \prod_{(i, j) \in \{(i_s, j_s)\}_{s=1}^\ell} \bra{\phi_{i,j}}\pr{B_i^W \otimes C_j^W}\ket{\phi_{i,j}}\cdot\\
        &\quad \prod_{\substack{i \in [n] \setminus \{i_s\}_{s=1}^\ell\\ j \in [n + 1, 2n] \setminus \{j_s\}_{s=1}^\ell}} \bra{y_i}B_i^W\ket{y_i} \bra{y_j}C_j^W\ket{y_j}.
    \end{align*}
    
    First, we deal with qubits without entanglement. By our choice of $y_B$ and $y_C$, we have
    \begin{align*}
         \bra{y_i}B_i^W\ket{y_i} &=
    % \begin{cases}
    %     \bra{f_1(z_1^n)_i} \ket{f_2(x_{n + 1}^{2n})_{h_W(i)}}^2 & i \in I \cap [n]\\
    %     \bra{x_i} \ket{x_1}^2 = 1 & i \in I^c \cap [n]
    % \end{cases}
    \begin{cases}
    \hspace{-0.4cm}
    \begin{split}
        \bra{(y_C)_i} &\ket{(y_B)_{h_W(i)}}^2 \\
        &i \in (I \cap [m]) \setminus \{i_s\}_{s=1}^\ell\\
        \bra{(y_B)_i} &\ket{(y_B)_{i}}^2 = 1 \\
        &i \in I^c \cap [m]
    \end{split}
    \end{cases}
    \,\\
    \bra{y_j}C_j^W\ket{y_j} &= 
    \begin{cases} 
    \hspace{-0.4cm}
    \begin{split}
        \bra{(y_C)_j} &\ket{(y_C)_j}^2=1  \\
        &j \in I \cap [m + 1, 2m]\\
        \bra{(y_B)_j}&\ket{(y_C)_{h_{W}^{-1}(j)}}^2 \\
        &j \in (I^c \cap [m + 1, 2m]) \setminus \{j_s\}_{s=1}^\ell.
     \end{split}
    \end{cases}
    \end{align*}
    The product of these two values is $1$ iff $(y_B)_i = (y_C)_{h_W(i)}$ since $h_W$ is chosen to restrict to a bijection on the indices
    \[(I \cap [m]) \setminus \{i_s\}_{s=1}^\ell \to (I^c \cap [m+1, 2m]) \setminus \{j_s\}_{s=1}^\ell.\]
    Furthermore, since $f$ is linear map, we have
    \[y = f(x + z) = f(x) + f(z) = y_B + y_C.\]
    Thus, Bob and Charlie win iff $y_i = y_{h_W(i)}$. This is exactly the condition in \Cref{prop:ClassicalStrat}, except we are replacing $x + z$ with $y$. Even so, we know $f: x + z \mapsto y$ is a bijection, and thus, the bits of $y$ are uniformly distributed when $x$ and $z$ are chosen uniformly at random. The probability that $y_i = y_{h_W(i)}$ holds on all indices $(I \cap [m]) \setminus \{i_s\}_{s=1}^\ell$ is thus $2^{-k + \ell}$.

    For the entangled qubit case, for all $i \in \{i_s\}_{s=1}^\ell \subseteq I \cap [m]$ and $j \in \{j_s\}_{s=1}^\ell \subseteq I^c \cap [m + 1, 2m]$ we have
    \[y_i = (y_C)_i \quad \text{ and } y_j = (y_B)_j.\]
    Now suppose $i = i_t$ and $j = j_t$ for some $t \in [\ell]$. By our choice of $B_i$ and $C_j$, we have
    \begin{align*}
        B_i = 
        \begin{cases}
            \ket{+i}\bra{+i} & y_{j} = 0\\
            \ket{-i}\bra{-i} & y_{j} = 1\\
        \end{cases}       
  \quad
        C_j = 
        \begin{cases}
            \ket{+i}\bra{+i} & y_{i} = 1\\
            \ket{-i}\bra{-i} & y_{i} = 0
        \end{cases}
    \end{align*}
    Recalling that $CNOT_{i, j}H\ket{y_{i, j}}$ can take on any of the four EPR pairs, we compute 
    \begin{align*}
    \bra{y_{i, j}} &H_{i} \CNOT_{i, j}\pr{B_i^W \otimes C_j^W}\CNOT_{i, j} H_{i} \ket{y_{i, j}}\\
    &= \begin{cases}
        \bra{\Phi^+} \pr{\ket{+i}\bra{+i} \otimes \ket{-i}\bra{-i}}\ket{\Phi^+} & \ket{y_{i, j}} = \ket{00}\\
        \bra{\Phi^-} \pr{\ket{+i}\bra{+i} \otimes \ket{+i}\bra{+i}}\ket{\Phi^-} & \ket{y_{i, j}} = \ket{10}\\
        \bra{\Psi^+} \pr{\ket{-i}\bra{-i} \otimes \ket{-i}\bra{-i}}\ket{\Psi^+} & \ket{y_{i, j}} = \ket{01}\\
        \bra{\Psi^-} \pr{\ket{-i}\bra{-i} \otimes \ket{+i}\bra{+i}}\ket{\Psi^-} & \ket{y_{i, j}} = \ket{11}\\
    \end{cases}\\
    &= \frac{1}{2}
    \end{align*}
    in every case. This means the probability that Bob and Charlie guess both bits correctly is $\frac{1}{2}$ for all $\ell$ EPR pairs. Combining this with the probability they guess all other bits correctly, we get $2^{-k + \ell} \cdot 2^{-\ell} = 2^{-k}$ as desired.
\end{proof}

Bob and Charlie's strategy is entirely different for the qubits that end up being Bell-state pairs. \rev{While it still holds that Bob guesses $x$ iff Charlie guesses $z$, they will win only if their measurement outcome correctly describes the Bell-state pairs.}

Suppose after Bob and Charlie make their guesses, Alice tells Bob and Charlie if they have guessed correctly. If there are no Bell state pairs, then Bob and Charlie know that their outputs are deterministic, so they can trust Alice's response. For example, if Alice plays the game for multiple rounds with the same $W$, $x$, and $z$, Bob and Charlie learn nothing new after the first round. However, if Alice tells Bob and Charlie they are incorrect $N$ consecutive rounds, they only know their guess was incorrect with probability $N2^{-\ell}$. This means Bob and Charlie gain less information from Alice's response as the number of entangled qubits increases.
\section{Conclusion}
We considered a stricter version of a coset guessing game introduced and analyzed in \cite{coladangelo2021hidden, QK:VidickZ21, CV22} and derived an upper bound below the previously known one. We then presented an optimal strategy to achieve this bound. While developing an optimal strategy, we devised an encoding circuit using only $\CNOT$ and Hadamard gates, \rev{which builds CSS codes from EPR pairs using only local operations.} We found that the role of quantum information that Alice communicates to Bob and Charlie is to make their responses correlated rather than improve their individual (marginal) correct guessing rates. 

% While developing an optimal strategy, we devised an encoding circuit using only $\CNOT$ and Hadamard gates, which could be relevant for building efficient CSS-coded systems. We found that the role of quantum information that Alice communicates to Bob and Charlie is to make their responses correlated rather than improve their individual (marginal) correct guessing rates. 

It remains open to determine asymptotically optimal strategies for the general game. We did not address the question of \textit{rigidity} (the uniqueness of optimal strategies) or \textit{robustness} (the closeness of nearly optimal strategies to the optimal). Both properties are relevant to the certification of devices in device-independent quantum cryptography (DIQC). Results about implementing monogamy-of-entanglement games into quantum key distribution (QKD) protocols have been studied in \cite{cervero2023device}. However, in contrast to the coset monogamy game, our coset guessing game cannot be considered a monogamy-of-entanglement game since the presence of Charlie does not lower Bob's probability of winning.

Subsequent work from this paper has derived stronger asymptotic upper bounds than previously known on the winning rate of a more general coset monogamy game setup \cite{schleppy2025winning}.

\section*{Acknowledgments}
We thank G.~L.~Matthews, A.~Paul, J.~Shapiro, and N.~Verma for the discussions at the 2024 VT-Swiss Coding Theory \& Cryptography Summer School \& Collaboration Workshop.
\begin{comment}
{\appendix[Sample Appendix]
Use $\backslash${\tt{appendix}} if you have a single appendix:
Do not use $\backslash${\tt{section}} anymore after $\backslash${\tt{appendix}}, only $\backslash${\tt{section*}}.
If you have multiple appendixes use $\backslash${\tt{appendices}} then use $\backslash${\tt{section}} to start each appendix.
You must declare a $\backslash${\tt{section}} before using any $\backslash${\tt{subsection}} or using $\backslash${\tt{label}} ($\backslash${\tt{appendices}} by itself
 starts a section numbered zero.)}    
\end{comment} 

%\newpage
\section*{Appendices}
\subsection{Some Standard Quantum Gates \label{sec:gates}}
	Unitary operators acting on one or more qubits are often called {\it quantum gates}.  The Pauli and Hadamard gates are the most commonly used single-qubit gates, and the $\CNOT$ is the most commonly used two-qubit gate.
\\[1ex]
\ul{The Identity and the Pauli Matrices/Gates}:\\[0.8ex]
The matrices are given below, together with their action on the basis vectors in the quantum circuit notation:
%\vspace*{-0.1ex}
%\hspace*{-10ex}
\begin{small}
		\begin{center}
                    \resizebox{3in}{1in}{
				\begin{tikzpicture}
			\node at (0,0) {$I = \begin{bmatrix}
			 1 & 0\\	0 &  1 \end{bmatrix}$};
		 	\node[right] at (0.75,0.35) {\Qcircuit @C=1em @R=.7em {
		 		&&& \lstick{\ket{0}}     & \gate{I} & \rstick{\ket{0}}\qw}};
		 \node[right] at (0.75,-0.35) {\Qcircuit @C=1em @R=.7em {
		 		&&& \lstick{\ket{1}}     & \gate{I} & \rstick{\ket{1}}\qw}};
				\begin{scope}[shift={(0,-1.85)}]
			\node at (-0.15,0) {$ X = \begin{bmatrix}
				0 &  1 \\ 1 & 0 \end{bmatrix}$};
			\node[right] at (0.75,0.35) {\Qcircuit @C=1em @R=.7em {
					&&& \lstick{\ket{0}}     & \gate{X} & \rstick{\ket{1}}\qw}};
			\node[right] at (0.75,-0.35) {\Qcircuit @C=1em @R=.7em {
					&&& \lstick{\ket{1}}     & \gate{X} & \rstick{\ket{0}}\qw}};
				\end{scope}
				\begin{scope}[shift={(5.5,0)}]
			\node at (-0.15,0) {$ Y = \begin{bmatrix}
				0 & -i \\ i & 0 \end{bmatrix}$};
		\node[right] at (0.75,0.35) {\Qcircuit @C=1em @R=.7em {
					&&& \lstick{\ket{0}}     & \gate{Y} & \rstick{i\ket{1}}\qw}};
		\node[right] at (0.75,-0.35) {\Qcircuit @C=1em @R=.7em {
					&&& \lstick{\ket{1}}     & \gate{Y} & \rstick{-i\ket{0}}\qw}};
			\end{scope}
			\begin{scope}[shift={(5.5,-1.85)}]
			\node at (-0.15,0) {$ Z = \begin{bmatrix}
				1 &  0 \\ 0 & -1 \end{bmatrix}$};
		\node[right] at (0.75,0.35) {\Qcircuit @C=1em @R=.7em {
					&&& \lstick{\ket{0}}     & \gate{Z} & \rstick{\ket{0}}\qw}};
		\node[right] at (0.75,-0.35) {\Qcircuit @C=1em @R=.7em {
					&&& \lstick{\ket{1}}     & \gate{Z} & \rstick{-\ket{1}}\qw}};
				\end{scope}
			\end{tikzpicture}
            }
							\vspace{-2ex}
		\end{center}
\end{small}
Note that $ X$ maps $\ket{0}$ into $\ket{1}$ and vice versa, and thus it is often referred to as the quantum {\tt NOT} or the bit-flip gate. 
\\[1ex]
\ul{The Hadamard gate}:\\[1ex]
The Hadamard gate is defined by the normalized Hadamard $2\times 2$ matrix. The matrix is given below, together with its action on the basis vectors in the quantum circuit notation:
\\
%\begin{center}
\begin{small}
\begin{tikzpicture}
				\node at (-3,0) {$H = \frac{1}{\sqrt{2}}\begin{bmatrix}
					1 &  1 \\ 1 & -1 \end{bmatrix}$};
				\node at (0,0.35) {\Qcircuit @C=1em @R=.7em {
						&&& \lstick{\ket{0}}     & \gate{H} & \rstick{(\ket{0}+\ket{1})/\sqrt{2}=\ket{+}}\qw}};
				\node at (0,-0.35) {\Qcircuit @C=1em @R=.7em {
						&&& \lstick{\ket{1}}     & \gate{H} & \rstick{(\ket{0}-\ket{1})/\sqrt{2}=\ket{-}}\qw}};
				\end{tikzpicture}
\end{small}
%\end{center}
\\
The basis of $\mathbb{C}^2$ defined by vectors 
$\ket{+}$ and $\ket{-}$ that results from the Hadamard action of the computational basis is known as the Hadamard basis.
\\[1.5ex]
\ul{The \texttt{CNOT} Gate}:\\[1ex]
The two-qubit quantum gate known as  quantum \texttt{XOR} or controlled-not gate \texttt{CNOT} is specified as a map of the bases vectors and as follows:

\begin{flushleft}
        \resizebox{3in}{0.5in}{
	\begin{tikzpicture}
	\node[right] at (0,0.35) {
		$\texttt{CNOT}: |x,y\rangle \rightarrow |x,x\oplus y\rangle$};
	\node[right] at (1.1,-0.35) {
		$x,y\in\{0,1\}$};
	\node[right] at (5,0) {\Qcircuit @C=1em @R=1em {
			\lstick{\ket{x}} & \ctrl{1} & \rstick{\ket{x}} \qw \\
			\lstick{\ket{y}} & \targ & \rstick{\ket{x\oplus y}} \qw
	}};
	\end{tikzpicture}
        }
\end{flushleft}
We refer to $x$ as the control bit and $y$ as the target. The target bit is flipped if the control bit equals $1$. Similarly, we say that qubit $\ket{x}$ is the control qubit, and qubit $\ket{y}$ is the target qubit. Observe that the target qubit undergoes the $X$ (NOT) action if the control qubit is $\ket{1}$.
\\[1.5ex]
We can create Bell states by $H$ and $\CNOT$ gates as follows:
\begin{center} 
\resizebox{2.45in}{1in}{
\begin{tikzpicture}
\node at (0,0) {\Qcircuit @C=1em @R=.7em {
  \ket{0} & & \gate{H} & \ctrl{3} & \qw &\\
  & & & & \rstick{\frac{1}{\sqrt{2}}\bigl(\ket{00}+\ket{11}\bigr)}\\
  & \\
  \ket{0} & & \qw & \targ & \qw &
}};
\node at (0,-2.5) {\Qcircuit @C=1em @R=.7em {
  \ket{0} & & \gate{H} & \ctrl{3} & \qw &\\
  & & & & \rstick{\frac{1}{\sqrt{2}}\bigl(\ket{01}+\ket{10}\bigr)}\\
  & \\
  \ket{1} & & \qw & \targ & \qw &
}};
\begin{scope}[xshift=1.25cm]
\node at (5.75,0) {\Qcircuit @C=1em @R=.7em {
  \ket{1} & & \gate{H} & \ctrl{3} & \qw &\\
  & & & & \rstick{\frac{1}{\sqrt{2}}\bigl(\ket{00}-\ket{11}\bigr)}\\
  & \\
  \ket{0} & & \qw & \targ & \qw &
}};
\node at (5.75,-2.5) {\Qcircuit @C=1em @R=.7em {
  \ket{1} & & \gate{H} & \ctrl{3} & \qw &\\
  & & & & \rstick{\frac{1}{\sqrt{2}}\bigl(\ket{01}-\ket{10}\bigr)}\\
  & \\
  \ket{1} & & \qw & \targ & \qw &
}};
\end{scope}
\end{tikzpicture}
}
\end{center}
\subsection{Positive Operator-Valued Measure (POVM)}
Quantum measurement extract classical information from quantum states. We define a general measurement as follows:
\begin{itemize}
	\item A set of  positive-semidefinite
 operators $\{E_i\}$ such that $\sum_iE_i=I$.
	\item  For input $\ket{\psi}$, output $i$ happens wp $\bra{\psi}E_i\ket{\psi}$, and $\ket{\psi}$ collapses to  $\frac{1}{\sqrt{\bra{\psi}E_i\ket{\psi}}}E_i\ket{\psi}$.
\end{itemize}
We most often consider positive-semidefinite $E_i$ that are projections on non-orthogonal vectors. 
%\footnote{Recall that a complex square matrix $M$ is positive semidefinite iff $\bra{\psi}M\ket{\psi}\ge 0$ for all non-zero complex vectors $\ket{\psi}$.}

%\section*{Proof of the First Zonklar Equation}
%Appendix one text goes here.
% You can choose not to have a title for an appendix if you want by leaving the argument blank
%\section*{Proof of the Second Zonklar Equation}
%Appendix two text goes here.}

%\newpage
\bibliographystyle{IEEEtran}
\bibliography{citation}

\end{document}